\let\accentvec\vec    		  	
\let\vec\accentvec    		  	
\DeclareFontFamily{U}{stmry}{}
\DeclareFontShape{U}{stmry}{b}{n}
   {  <5> <6> <7> <8> <9> <10> gen * stmary
      <10.95><12><14.4><17.28><20.74><24.88>stmary10%
   }{}
\DeclareFontShape{U}{stmry}{m}{n}
   {  <5> <6> <7> <8> <9> <10> gen * stmary
      <10.95><12><14.4><17.28><20.74><24.88>stmary10%
   }{}
\newcommand{\notation}{\noindent\textbf{Notation:} }	
\newcommand{\ie}{{i.e.\ }}			
\newcommand{\eg}{{e.g.\ }}			
\newcommand{\cf}{{cf.~}} 	   		
\newcommand{\cbv}{\textsc{cbv}}                 
\newcommand{\cbn}{\textsc{cbn}}                 
\renewcommand{\t}{\ensuremath{\mathbf{t}}}      
\renewcommand{\u}{\ensuremath{\mathbf{u}}}      
\newcommand{\s}{\ensuremath{\mathbf{s}}}        
\renewcommand{\v}{\ensuremath{\mathbf{v}}}      
\newcommand{\zero}{{\ensuremath{\mathbf{0}}} }	
\newcommand{\subs}[2][x]{\ensuremath{\{#2/#1\}}}
\newcommand{\fv}[1]{\ensuremath{f\!v}(#1)}      
\newcommand{\T}{\ensuremath{T}}                 
\newcommand{\R}{\ensuremath{R}}                 
\renewcommand{\S}{\ensuremath{S}}                 
\newcommand{\U}{\ensuremath{U}}                 
\newcommand{\V}{\ensuremath{V}}                 
\newcommand{\W}{\ensuremath{W}}                 
\newcommand{\add}{\ensuremath{Additive}}	
\newcommand{\tzero}{{\ensuremath{\overline{0}}} } 
\newcommand{\sui}[1]{\sum_{i=1}^{#1}}		
\newcommand{\suj}[1]{\sum_{j=1}^{#1}}		
\newcommand{\type}{\colon\!}			
\newcommand{\ve}[1]{\ensuremath{\mathbf{#1}}}	
\newcommand{\thesi}{\vdash_{\scriptstyle F}}	
\newcommand{\uno}{\ensuremath{\mathbf{1}}}	
\newcommand{\trad}[1]{|#1|}			
\newcommand{\ssub}{\prec}			
\newcommand{\tsubs}[2][X]{\ensuremath{[#2/#1]}}	
\renewcommand{\b}{\ensuremath{\mathbf{b}}}      
\newcommand{\cre}{\ensuremath{\mathcal{R\!C}}}	
\newcommand{\terms}{\ensuremath{`L_0}}		
\newcommand{\sn}{\ensuremath{S\!N_0}}		
\newcommand{\neu}{\ensuremath{\mathcal{N}}}	
\newcommand{\sets}{\ensuremath{\mathcal{S}}}	
\newcommand{\red}[2][]{\ensuremath{\mathit{Red_{#1}(#2)}}} 
\newcommand{\cru}{\ensuremath{(\mathit{CR}_1)}}	
\newcommand{\crd}{\ensuremath{(\mathit{CR}_2)}}	
\newcommand{\crt}{\ensuremath{(\mathit{CR}_3)}}	
\newcommand{\clo}[1]{\ensuremath{\overline{#1}}}
\newcommand{\mas}{\ensuremath{\mp}}		
\newcommand{\itp}[2][`r]{\ensuremath{\llbracket #2 \rrbracket_{#1}}} 
\newcommand{\valid}{\vDash}			
\newcommand{\satis}[3][`r]{\ensuremath{#2`:\itp[#1]{#3}}} 
\newcommand{\F}{\ensuremath{\text{System}~F_{\!P}}}
\newcommand{\tof}{\ensuremath{"=>"}}			
\newcommand{\sadd}{\ensuremath{Add_{str}}}		
\newcommand{\A}{\ensuremath{\mathcal{T}}}		
\newcommand{\Az}{\ensuremath{\mathtt{Z}}}		
\newcommand{\As}{\ensuremath{\mathtt{S}}}		
\newcommand{\lab}{\ensuremath{s}}                       
\newcommand{\estruct}{\ensuremath{\to_{E'}}}		
\newcommand{\pair}[2]{\ensuremath{\langle #1,#2\rangle}}
\newcommand{\pil}[1]{\ensuremath{`p_{\mathtt{l}}(#1)}}        
\newcommand{\pir}[1]{\ensuremath{`p_{\mathtt{r}}(#1)}}        
\newcommand{\tradt}[2][\mathcal{D}]{[#2]_{#1}}		
\newcommand{\tradti}[1]{\langle\!|{#1}|\!\rangle}       
\newcommand{\tradi}[1]{(\!|{#1}|\!)}   			
\newcommand{\D}{\ensuremath{\mathcal{D}}}		
\tikzset{
  every picture/.style={every node/.style={anchor=mid}},
  to right/.style= {                         
    to path={ (\tikztostart.mid east)
              --
              (\tikztotarget.mid west)
              \tikztonodes }
          },
  to left/.style={                         
    to path={ (\tikztostart.mid west)
              --
              (\tikztotarget.mid east)
              \tikztonodes }
          },
  m/.style       = {execute at begin node= \small$,
                    execute at end node=$%
                  },
  tiny m/.style  = {execute at begin node= $\scriptstyle,
                    execute at end node=$%
},
  etoile/.style  = {at end,            
                    font=$\scriptstyle *$,
                    auto=left},
  punto/.style={fill,
                draw,
                circle,
                inner sep=1pt,
                label=below:{$#1$}
              },
  punto/.default={},
  triang/.style={isosceles triangle,
                 draw,
                 shape border rotate=90,
                 anchor=left corner,
                 minimum width=#1
                 },
  style={
         parent anchor=center,
         child anchor=north,
         sibling distance=3em,
         level distance=2em}
     }
\begin{document}

\title{Linearity in the Non-deterministic\\ Call-by-Value Setting}
\author{Alejandro D\'iaz-Caro\inst{1,}\thanks{Supported by grants from DIGITEO and R\'egion \^Ile-de-France}
  \and Barbara Petit\inst{2}}
\institute{Universit\'e Paris 13, Sorbonne Paris Cit\'e, LIPN, F-93430, Villetaneuse, France
  \and \textsc{Focus} (\textsc{inria}) -- Universit\`a di Bologna, Italy
  }

\maketitle

\begin{abstract}
We consider the non-deterministic extension of the call-by-value lambda calculus, which corresponds to the additive fragment of the linear-algebraic lambda-calculus. We define a fine-grained type system, capturing the right linearity present in such formalisms. After proving the subject reduction and the strong normalisation properties, we propose a translation of this calculus into the System $F$ with pairs, which corresponds to a non linear fragment of linear logic. The translation provides a deeper understanding of the linearity in our setting.
\end{abstract}

\section*{Introduction}
\label{sec:intro}
Several non-deterministic extensions of $\lambda$-calculus have been proposed in the literature, \eg \cite{BoudolIC94,BucciarelliEhrhardManzonettoAPAL12,DezaniciancagliniDeliguoroPipernoTCS96,DezaniciancagliniDeliguoroPipernoSIAM98}. In these approaches, the sometimes called {\em must-convergent parallel composition}, is such that if $\t$ and $\u$ are two $\lambda$-terms, $\t+\u$ (also written $\t\parallel\u$) represents the computation that runs either $\t$ or $\u$ non-deterministically. Therefore, $(\t+\u)\s$ can run either $\t\s$ or $\u\s$, which is exactly what $\t\s+\u\s$ expresses. Extra rewriting rules (or equivalences, depending on the presentation) are set up to account for such an interpretation, \eg $(\t+\u)\s\to\t\s+\u\s$.

This right distributivity can alternatively be seen as the one of the function sum: $(\mathbf{f}+\mathbf{g})(x)$ is defined as $\mathbf{f}(x)+\mathbf{g}(x)$. This is the approach of the algebraic lambda-calculi presented in~\cite{ArrighiDowekRTA08} and~\cite{VauxMSCS09}, that were introduced independently but that resulted afterwards to be strongly related~\cite{AssafPerdrixDCM11,DiazcaroPerdrixTassonValironHOR10}. In these algebraic calculi, a scalar pondering each `choice' is considered in addition to the sum of terms.

In the call-by-value (or \cbv) version of these algebraic/non-deterministic calculi, \eg \cite{BoudolIC94,DezaniciancagliniDeliguoroPipernoSIAM98,ArrighiDowekRTA08}, it is natural to consider also the left distributivity of application over sums: $\t(\u+\s)\to\t\u+\t\s$. To our knowledge, this was first observed in~\cite{HennessySIAM80}. Indeed, a sum $\u+\s$ is not a value, in the sense that it represents a non-deterministic choice that remains to be done, and therefore cannot subsitute the argument~$x$. In algebraic terms, it means that functions are linear: $\mathbf{f}(x+y)=\mathbf{f}(x)+\mathbf{f}(y)$.

The work we present here is motivated by a better understanding of this linearity, and so our first attempt was to interpret such a \cbv\ calculus in Linear Logic~\cite{GirardTCS87} (indeed linear functions can be precisely characterised in this logic). Surprisingly, it appeared that the target calculus was a non linear fragment of the intuitionistic multiplicative exponential Linear Logic (\textsc{imell}), shining a light on the difference between the linearity in these non-deterministic calculi, and the common algebraic notion of linear functions. Since the non linear fragment of \textsc{imell} corresponds to the System~$F$ with pairs~\cite[Sec.~1.5]{Dicosmo95}, and this latter might be better known by the reader, we present in this paper a (reversible) translation into the System~$F$ with pairs.

Notice also that the left distributivity of application over sum induces a completely different computational behaviour compared to the one in \cbn\ calculi. Consider for instance the term $`d=`lx.xx$ applied to a sum $\t+\u$. In the first case, it reduces to $`d\t+`d\u$ and then to $\t\t+\u\u$, whereas a \cbn\ reduction would lead to $(\t+\u)(\t+\u)$ and then to $\t(\t+\u)+\u(\t+\u)$. In particular, the \cbv\ algebraic calculus we mentioned above (Lineal,~\cite{ArrighiDowekRTA08}) was originally meant to express quantum computing, where a superposition $\t+\u$ is seen as a quantum superposition. Hence reducing $`d(\t+\u)$ into $(\t+\u)(\t+\u)$ is considered as the forbidden quantum operation of ``cloning''~\cite{WoottersZurekNATURE82}, while the alternative reduction to $\t\t+\u\u$ is seen as a ``copy'', or \textsc{cnot}, a fundamental quantum operation~\cite{MonroeMeekhofKingItanoWinelandPRL95}.

\paragraph{Outline.}
In this paper we propose (in Sec.~\ref{sec:add}) a type system, called \add, capturing the linear \cbv\ behaviour of the sum operator that we discussed above. Then we prove its correctness properties, namely subject reduction and strong normalisation in Sec.~\ref{sec:sr-sn}. Its logical interpretation (that is, the translation into System $F$ with pairs) is developed in Sec.~\ref{sec:trans}. We conclude with a discussion about the linearity of the call-by-value setting. We leave in the appendices extra examples and some technical details such as auxiliary lemmas.

\section{The Calculus}
\label{sec:add}
\subsection{The Language}\label{sec:language}

We consider the call-by-value $\lambda$-calculus~\cite{PlotkinTCS75} extended with a non-deterministic operator in the spirit of the parallel composition from~\cite{BucciarelliEhrhardManzonettoAPAL12}. This setting can be seen as the additive fragment of Lineal~\cite{ArrighiDowekRTA08}. The set of {\em terms} and the set of {\em values} are defined by mutual induction as follows (where variables range over a countable set and are denoted by $x,y,z$):
\vspace{-0.25cm}

$$\begin{array}{l@{\hspace{1cm}}r@{\quad::=\quad}l}
 \mbox{Terms:}  & \t,\u,\s & \v~|~\t\u~|~\t+\u~|~\zero\\
 \mbox{Values:} & \v & x~|~\lambda x.\t
\end{array}$$

Intuitively $\t+\u$ denotes the {\em non-deterministic choice} between $\t$ and $\u$, and hence, as discussed in the introduction, $(\t+\u)\s$ {\em reduces} to the non-deterministic choice $\t\s+\u\s$. Analogously, in this call-by-value setting, $\t(\u+\s)$ reduces to $\t\u+\t\s$. The term $\zero$ is introduced to express the {\em impossible computation}, and hence $\t+\zero$ always reduces to $\t$, while $\t\zero$ and $\zero\t$ reduce to $\zero$, because none of them continue reducing (notice that $\zero$ is not a value), and have an impossible computation on them.
Since the operator $+$ represents a non deterministic choice, where no one have precedence, terms are considered modulo associativity and commutativity of~$+$ (that is an {\em AC-rewrite system}~\cite{JouannaudKirchnerSIAM86}). Notice that considering $\t+\u$ either as a sum of functions or as a sum of arguments---depending on its position---is also natural with the previous definitions, where $\zero$ becomes the sum of $0$ elements.

The {\em $\alpha$-conversion} and the set \fv{\t} of {\em free variables of $\t$} are defined as usual (cf.~\cite[Sec.~2.1]{Barendregt84}). We say that a term $\t$ is closed whenever $\fv{\t}=\emptyset$. Given a term $\t$ and a value $\v$, we denote by $\t\subs{\v}$ to the term obtained by simultaneously substituting $\v$ for all the free occurrences of $x$ in $\t$, taking care to rename bound variables when needed in order to prevent variable capture. Hereafter, terms are considered up to $\alpha$-conversion.
The five rewrite rules plus the $\beta$-reduction are summarised as follows.
\medskip

\noindent\begin{tabular}{c@{\hspace{1,25cm}}c@{\qquad}cc@{\hspace{1,25cm}}c}
\emph{Distributivity rules:} & \multicolumn{2}{c}{\emph{Zero rules:}} && \emph{$`b$-reduction:}\\
$(\t+\u)\s\to\t\s+\u\s$,	& $\zero\t\to \zero$, & $\t + \zero\to \t$, && $(\lambda x.\t)\v\to\t\subs{\v}$.\\
$\t(\u+\s)\to\t\u+\t\s$,	& $\t\zero\to \zero$, &	&&
\end{tabular}

\subsection{The \add\ Type System}\label{sec:types}

Our objective is to define a type system, capturing as much as possible the behaviour of $+$. Roughly speaking, we want a system where, if $\t$ has type $\T$ and $\u$ has type $\R$, then $\t+\u$ has type $\T+\R$. So the natural typing rule for such a construction is ``$\Gamma\vdash\t:\T$ and $\Gamma\vdash\u:\R$ entails $\Gamma\vdash\t+\u:\T+\R$''. We also want a special type distinguishing the impossible computation $\zero$, which we call $\tzero$. Due to the associative and commutative nature of $+$, we consider an equivalence between types taking into account its commutative nature. Hence if $\T+\R$ is a type, $\R+\T$ is an equivalent type. Also the neutrality of $\zero$ with respect to $+$ is captured by an equivalence between $\T+\tzero$ and $\T$. Finally, as usual the arrow type $\T\to\R$ characterises the functions taking an argument in $\T$ and returning an element of $\R$. However, notice that the type $(\T+\R)\to\S$ captures a behaviour that is not appearing in our setting: there is no function taking a non-deterministic superposition as argument. Indeed, if $\v_1$ has type $\T$ and $\v_2$ type $\R$, any function $\t$ distributes $\t(\v_1+\v_2)$ as $\t\v_1+\t\v_2$, so $\t$ needs to be characterised by a function taking both $\T$ and $\R$, but not simultaneously. In order to capture such a behaviour, we introduce a {\em unit} type $\U$ (i.e.~an atomic type with respect to $+$), capturing elements which are not sums of elements, and hence the arrow types have the shape $\U\to\T$, where the different arguments to which the function can be applied, are captured by polymorphic types with variables ranging on unit types. For example, the previous term $\t$ can have type $\forall X.(X\to\S)$, where if $\t$ is applied to the above discussed $\v_1+\v_2$ of type $\T+\R$, it reduces to $\t\v_1+\t\v_2$ of type $\S\tsubs{\T}+S\tsubs{\R}$.

To take into account the above discussion, the grammar of the \add\ type system is defined by mutual induction as follows (where type variables range over a countable set and are denoted by $X,Y,Z$):
\vspace{-0.25cm}

$$\begin{array}{l@{\hspace{1cm}}r@{\quad::=\quad}l}
 \mbox{Types:}  	& \T,\R,\S & \U~|~\T+\R~|~\tzero\\
 \mbox{Unit types:} & \U,\V,\W & X~|~\U\to\T~|~\forall X.U
\end{array}$$

Contexts are denoted by $\Gamma,\Delta$ and are defined as sets of pairs $x:\U$, where each term variable appears at most once.
The substitution of~$X$ by $\U$ in~$\T$ is defined analogously to the substitution in terms, and is written~$\T\tsubs{\U}$. We also use the vectorial notation~$\T\tsubs[\vec X]{\vec{\U}}$ for $\T\tsubs[X_1]{\U_1}\cdots\tsubs[X_n]{\U_n}$ if $\vec{X}=X_1,\dots,X_n$ and $\vec{\U}=\U_1,\dots,\U_n$. To avoid capture, we consider that $X_i$ cannot appear free in $\U_j$, with $j<i$. Free and bound variables of a type are assumed distinct.

The above discussed equivalence relation~$\equiv$ on types, is defined as the least congruence such that:

\hfill
$\begin{array}{c@{\hspace{1cm}}c@{\hspace{1cm}}c}
\T+\R\equiv\R+\T,	&	\T+(\R+\S)\equiv(\T+\R)+\S,	&	\T+\tzero\equiv\T.
  \end{array}$\hfill\

Within this equivalence, it is consistent to use the following notation:

\notation
$\sui{0}{\T} = \tzero$\quad;\quad
$\sui{\alpha}{\T_i} = \sui{\alpha-1}{\T_i} + \T_\alpha$\quad
if $\alpha\geq 1$.

\begin{remark}
  \label{rem:unitsaddition}
  Every type is equivalent to a sum of unit types.
\end{remark}

Returning to the previous example, $\t(\v_1+\v_2)$ reduces to $\t\v_1+\t\v_2$ and its type have to be an arrow with a polymorphic unit type at the left. Such a type must allow to be converted into both the type of $\v_1$ and the type of $\v_2$. Hence, consider $\V_1$ and $\V_2$ to be the respective types of $\v_1$ and $\v_2$, we need $\t$ to be of type $\forall X.(\U\to \S)$ for some $\S$ and where $\U\tsubs{\W_1}=\V_1$ and $\U\tsubs{\W_2}=\V_2$ for some unit types $\W_1$ and $\W_2$. That is, we need that if $\t$ has such a type, then $\v_1$ has type $\U\tsubs{\W_1}$ and $\v_2$ type $\U\tsubs{\W_2}$. We can express this with the following rule
$$\prooftree\Gamma\vdash\t:\forall X.(\U\to \S)\qquad\Gamma\vdash\v_1+\v_2:\U\tsubs{\W_1}+\U\tsubs{\W_2}
\justifies\Gamma\vdash\t(\v_1+\v_2):\S\tsubs{\W_1}+\S\tsubs{\W_2}
\endprooftree$$

In the same way, for the right distributivity, if $\t$ and $\u$ are two functions of types $\U\to\T$ and $\V\to\R$ respectively, then the application $(\t+\u)\v$ needs $\U$ and $\V$ to be the type of $\v$. Therefore, the polymorphism plays a role again, and if $\t$ has type $\forall X.(\U\to\T)$ and $\u$ has type $\forall X.(\V\to\R)$ such that $\U\tsubs{\W_1}=\V\tsubs{\W_2}$ and also equal to the type of $\v$, then $(\t+\u)\v$ has a type. It can be expressed by
$$\prooftree\Gamma\vdash\t+\u:\forall X.(\U\to\S)+\forall X.(\V\to\R)\qquad\Gamma\vdash\v:\U\tsubs{\W_1}=\V\tsubs{\W_2}
\justifies\Gamma\vdash(\t+\u)\v:\S\tsubs{\W_1}+\R\tsubs{\W_2}
\endprooftree$$

Notice that when combining both cases, for example in $(\t+\u)(\v_1+\v_2)$, we need the type of $\t$ to be an arrow accepting both the type of $\v_1$ and the type of $\v_2$ as arguments, and the same happens with the type of $\u$. So, the combined rule is
$$\prooftree\Gamma\vdash\t+\u:\forall X.(\U\to\S)+\forall X.(\U\to\R)\qquad\Gamma\vdash\v_1+\v_2:U\tsubs{V}+U\tsubs{W}
\justifies\Gamma\vdash(\t+\u)(\v_1+\v_2):\S\tsubs{V}+\R\tsubs{W}
\endprooftree$$

The arrow elimination has become also a forall elimination. For the general case however it is not enough with the previous rule. We must consider bigger sums, which are not typable with such a rule, as well as arrows with more than one $\forall$, e.g.~$\forall X.\forall Y.(\U\to\R)$, where $\U\tsubs{V}\tsubs[Y]{W}$ has the correct type. Since it is under a sum, and the elimination must be done simultaneously in all the members of the sum, it is not possible with a traditional forall elimination.

The generalised arrow elimination as well as the rest of the typing rules are summarised in Fig.~\ref{fig:types}. Rules for the universal quantifier, axiom and introduction of arrow are the usual ones.
As discussed before, any sum of typable terms can be typed using rule~$+_I$. Notice that there is no elimination rule for~$+$ since the actual non-deterministic choice step (which eliminates one branch) is not considered here. For similar calculi where the elimination is present in the operational semantics, see e.g.~\cite{BucciarelliEhrhardManzonettoAPAL12,DiazcaroDowek12}. Finally, a rule assigns equivalent types to the same terms.

\begin{figure}[ht]
  \begin{center}
    $\prooftree
    \justifies\Gamma, x:\U\vdash x:\U
    \using ax
    \endprooftree$\qquad
    $\prooftree
    \justifies\Gamma\vdash\zero:\tzero
    \using ax_\tzero
    \endprooftree$
    \qquad
    $\prooftree\Gamma\vdash\t:\T\qquad\T\equiv\R
    \justifies\Gamma\vdash\t:\R
    \using\equiv
    \endprooftree$
    \medskip

    $\prooftree\Gamma, x:\U \vdash\t:\T
    \justifies\Gamma \vdash \lambda x.\t:\U\to \T
    \using\to_I
    \endprooftree$\qquad
    $\prooftree\Gamma \vdash\t:\sui{\alpha}\forall\vec X.(\U\to\T_i) \qquad \Gamma\vdash\u:\suj{\beta}\U\tsubs[\vec X]{\vec\V_j}
    \justifies\Gamma \vdash\t\u:\sui{\alpha}\suj{\beta}{\T_i\tsubs[\vec X]{\vec\V_j}}
    \using\to_E
    \endprooftree$
	\medskip

	$\prooftree \Gamma\vdash\t:\T \qquad \Gamma\vdash\u:\R
    \justifies \Gamma\vdash\t+\u:\T+\R
    \using +_I
    \endprooftree$
    \qquad
    $\prooftree\Gamma\vdash\t:\forall X.\U
    \justifies\Gamma\vdash\t:\U\tsubs{\V}
    \using\forall_E
    \endprooftree$\qquad
    $\prooftree\Gamma \vdash\t:\U\qquad X\notin FV(\Gamma)
    \justifies\Gamma \vdash\t:\forall X.\U
    \using\forall_I
    \endprooftree$
  \end{center}
  \caption{Typing rules of \add}
  \label{fig:types}
\end{figure}

\begin{example}\label{ex:flecha-elim}
  Let $\V_1=\U\tsubs{\W_1}$, $\V_2=\U\tsubs{\W_2}$, $\Gamma\vdash\v_1:\V_1$, $\Gamma\vdash\v_2:\V_2$, $\Gamma\vdash\lambda x.\t:\forall X.(\U\to\T)$ and $\Gamma\vdash\lambda y.\u:\forall X.(\U\to\R)$. Then
	$$\prooftree\Gamma\vdash\lambda x.\t+\lambda y.\u:\forall X.(\U\to\T)+\forall X.(\U\to\R)
        \qquad\Gamma\vdash\v_1+\v_2:\V_1+\V_2
	\justifies\Gamma\vdash(\lambda x.\t+\lambda y.\u)(\v_1+\v_2)\!:\T\tsubs{\W_1}+\T\tsubs{\W_2}+\R\tsubs{\W_1}+\R\tsubs{\W_2}
	\using\to_E
	\endprooftree$$
	Notice that this term reduces to
	   $\underbrace{(\lambda x.\t)\v_1}_{\T\tsubs{\W_1}} +
        \underbrace{(\lambda x.\t)\v_2}_{\T\tsubs{\W_2}} +
        \underbrace{(\lambda y.\u)\v_1}_{\R\tsubs{\W_1}} +
        \underbrace{(\lambda y.\u)\v_2}_{\R\tsubs{\W_2}}$.
\end{example}

\begin{example}\label{ex:identity}
 Let $\Gamma\vdash\v_1:\U$ and $\Gamma\vdash\v_2:\V$. Then the term $(\lambda x.x)(\v_1+\v_2)$, which reduces to $(\lambda x.x)\v_1+(\lambda x.x)\v_2$, can be typed in the following way:
$$\prooftree\Gamma\vdash\lambda x.x:\forall X.X\to X\qquad\Gamma\vdash\v_1+\v_2:\U+\V
\justifies\Gamma\vdash(\lambda x.x)(\v_1+\v_2):\U+\V
\using\to_E
\endprooftree$$
Notice that without the simultaneous forall/arrow elimination, it is not possible to type such a term.
\end{example}

\section{Main Properties}
\label{sec:sr-sn}
The \add\ type system is consistent, in the sense that typing is preserved by reduction (Theorem~\ref{thm:subjectreduction}).
Moreover, only terms with no infinite reduction are typable (Theorem~\ref{thm:strongnormalisation}).

The preservation of types by reduction, or \textit{subject reduction} property, is proved by adapting the proof of Barendregt~\cite[Section~4.2]{Barendregt92} for the System~$F$:
we first define a binary relation~$\preccurlyeq$ on types, and then prove the usual generation and substitution lemmas (cf.~Appendix~\ref{sec:pr-sr} for more details).

\begin{theorem}[Subject Reduction]\label{thm:subjectreduction}
  For any terms~$\t,\t'$, any context~$`G$ and any type~$\T$,
  if $\t\to^*\t'$ then
  $\Gamma\vdash\t\type
  \T\Rightarrow\Gamma\vdash\t'\type \T$.
\end{theorem}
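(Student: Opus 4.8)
The plan is to reduce the statement to a single reduction step and then to invert typings with generation lemmas, closing the $\beta$-case with substitution lemmas. Since $\to^*$ is the reflexive–transitive closure of $\to$, an induction on the length of the reduction sequence reduces the theorem to showing that $\Gamma\vdash\t\type\T$ and $\t\to\t'$ imply $\Gamma\vdash\t'\type\T$. I would prove this by induction on the derivation of $\Gamma\vdash\t\type\T$. The rules $\equiv$, $\forall_I$ and $\forall_E$ leave the term untouched, so there the induction hypothesis applies directly to the premise and the same rule is reapplied; this is what makes the contextual closure of $\to$ transparent. In each structural rule, either the contracted redex lies strictly inside a subterm, in which case I apply the induction hypothesis to the reducing premise and rebuild the derivation with the same rule instance, or the redex is at the root, and I treat each of the six rewrite rules in turn.

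To carry out the root cases I first establish, following Barendregt, the relation $\preccurlyeq$ and show it compatible with $\equiv$ and with instantiation $\T\mapsto\T\tsubs{\V}$; this lets me state generation lemmas that peel off the ambient $\equiv/\forall_I/\forall_E$ layers. The lemmas I need are: that $\Gamma\vdash\lambda x.\t\type\T$ forces $\T\equiv\forall\vec X.(\U\to\R)$ with $\Gamma,x\type\U\vdash\t\type\R$ and $\vec X\notin FV(\Gamma)$; that $\Gamma\vdash\t+\u\type\T$ splits, modulo $\equiv$, as $\T\equiv\T'+\R'$ with $\Gamma\vdash\t\type\T'$ and $\Gamma\vdash\u\type\R'$; that $\Gamma\vdash\zero\type\T$ forces $\T\equiv\tzero$; and a term-substitution lemma stating that $\Gamma,x\type\U\vdash\t\type\T$ together with $\Gamma\vdash\v\type\U$ give $\Gamma\vdash\t\subs{\v}\type\T$.

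The zero and distributivity rules are then routine. For $\t+\zero\to\t$, generation splits the type as $\T'+\R'$ with $\R'\equiv\tzero$, so $\T\equiv\T'$ and $\equiv$ concludes. For $\t\zero\to\zero$ and $\zero\t\to\zero$, inverting the $\to_E$ that typed the application together with $\zero\type\tzero$ forces the corresponding sum (the argument $\suj{\beta}\U\tsubs[\vec X]{\vec\V_j}$ for $\t\zero$, the function $\sui{\alpha}\forall\vec X.(\U\to\T_i)$ for $\zero\t$) to be empty, so the result type $\sui{\alpha}\suj{\beta}\T_i\tsubs[\vec X]{\vec\V_j}$ collapses to $\tzero$, matched by $ax_\tzero$. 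For $(\t+\u)\s\to\t\s+\u\s$, I invert the $\to_E$ to obtain $\t+\u\type\sui{\alpha}\forall\vec X.(\U\to\T_i)$ and $\s\type\suj{\beta}\U\tsubs[\vec X]{\vec\V_j}$; since $\U$ and the $\vec\V_j$ are shared across the outer sum, generation on $+$ partitions the index set $\{1,\dots,\alpha\}$ between $\t$ and $\u$, I retype $\t\s$ and $\u\s$ by two instances of $\to_E$, and recombine them with $+_I$ and $\equiv$ to recover $\T$. The left-distributivity case partitions the argument sum instead and is symmetric.

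The main obstacle is the $\beta$-case $(\lambda x.\t)\v\to\t\subs{\v}$ in the presence of the generalised elimination $\to_E$. Inverting the application gives $\lambda x.\t\type\sui{\alpha}\forall\vec X.(\U\to\T_i)$ and $\v\type\suj{\beta}\U\tsubs[\vec X]{\vec\V_j}$; but $\lambda x.\t$ and $\v$ are values, whose generation forces their types to be unit (up to $\tzero$ summands), so both outer sums collapse to a single summand and I am left with $\lambda x.\t\type\forall\vec X.(\U\to\R)$, $\v\type\U\tsubs[\vec X]{\vec\V}$ and target $\T\equiv\R\tsubs[\vec X]{\vec\V}$. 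Generation on the abstraction yields $\Gamma,x\type\U\vdash\t\type\R$ with $\vec X\notin FV(\Gamma)$; a type-substitution lemma (the delicate ingredient, since the simultaneous instantiation $\tsubs[\vec X]{\vec\V}$ must commute with the whole derivation and, because $\vec X\notin FV(\Gamma)$, leave $\Gamma$ untouched) then gives $\Gamma,x\type\U\tsubs[\vec X]{\vec\V}\vdash\t\type\R\tsubs[\vec X]{\vec\V}$; and finally the term-substitution lemma, applied with $\Gamma\vdash\v\type\U\tsubs[\vec X]{\vec\V}$, produces $\Gamma\vdash\t\subs{\v}\type\R\tsubs[\vec X]{\vec\V}\equiv\T$. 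The bookkeeping of the simultaneous substitutions, the shared $\U$, and the side condition $\vec X\notin FV(\Gamma)$ is where essentially all the difficulty concentrates.
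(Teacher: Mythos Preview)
Your proposal is correct and follows essentially the same route as the paper: define the relation~$\preccurlyeq$, establish generation lemmas for application, abstraction and sum, prove the type- and term-substitution lemmas together with the fact that $\zero$ has type~$\tzero$ and values have unit types, and then argue by induction on the typing derivation, handling each rewrite rule at the root as you describe. The only cosmetic difference is that the paper states generation for abstractions via $\U\to\R\preccurlyeq\T$ (together with an arrow-comparison lemma) rather than your explicit form $\T\equiv\forall\vec X.(\U\to\R)$; the two formulations are interderivable once the type-substitution lemma is available, and your $\beta$-case analysis is exactly the intended one.
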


We also prove the strong normalisation property (\ie no typable term has an infinite reduction) by adapting the standard method of \emph{reducibility candidates}~\cite[Chap.~14]{Girard89} to the \add\ type system.
The idea is to interpret types by reducibility candidates, which are
sets of strongly normalising terms.
Then we show that as soon as a term has a type, it is in its
interpretation, and thereby is strongly normalising.

We define here candidates as sets of \emph{closed} terms.
The set of all the closed terms is writen~\terms, and \sn\ denotes the set of
\emph{strongly normalising} closed terms.
In the following, we write \red{\t} for the set of reducts in one step
of a term~$\t$ (with any of the six rules given in
Sec.~\ref{sec:language}), and \red[*]{\t} for the set of its
reducts in any number of steps (including itself).
Both notations are naturally extended to sets of terms.
A term is a \emph{pseudo value} when it is an abstraction or a sum of them:
$\b, \b' ::= `lx.\t~|~\b+\b'.$
A term that is not a pseudo value is said to be \emph{neutral}, and we
denote by~\neu\ the set of \emph{closed neutral} terms.

\begin{definition}
  A set~$\sets``(=\terms$ is a \emph{reducibility candidate} if it
  satisfies the three following conditions:
  \cru Strong normalisation: $\sets``(=\sn$.
  \crd Stability under reduction: $\t`:\sets\ "=>"\ \red{\t}``(=\sets$.
  \crt Stability under neutral expansion:
    If $\t`:\neu$, then $\red{\t}``(=\sets$ implies $\t`:\sets$.
\end{definition}
We denote the reducibility candidates by $\cal A,B$, and the set of all the reducibility candidates by \cre.
Note that \sn\ is in \cre.
In addition, the term~$\zero$ is a neutral term with no reduct, so it is
in every reducibility candidate by~\crt.
Hence every reducibility candidate is non-empty.

Let \clo{\sets} be the closure of a set of terms~$\sets$ by \crt.
It can be defined inductively as follows:
If $\t`:\sets$, then $\t `: \clo \sets$, and if
$\t`:\neu$ and $\red{\t}``(=\clo\sets$, then ${\t `: \clo \sets}$.

We can actually use this closure operator to define reducibility candidates:
\begin{lemma}\label{lem:cre}
  If $\sets``(=\sn$, then $\clo{\red[*]{\sets}} `: \cre$.
\end{lemma}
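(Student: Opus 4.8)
The plan is to check directly the three defining clauses \cru, \crd{} and \crt{} of a reducibility candidate for the set $\clo{\red[*]{\sets}}$. Before touching the closure, I would first record the relevant properties of the underlying set $\red[*]{\sets}$. Since $\sets``(=\sn$ and every reduct of a strongly normalising closed term is again a strongly normalising closed term, we get $\red[*]{\sets}``(=\sn$. Moreover $\red[*]{\sets}$ is already stable under one-step reduction: if $\t`:\red[*]{\sets}$ and $\t'`:\red{\t}$, then $\t'$ is a reduct of the same element of $\sets$ as $\t$, hence $\t'`:\red[*]{\sets}$. Finally, since reduction preserves closedness and $\neu$ consists of closed terms, every term occurring in $\clo{\red[*]{\sets}}$ is closed, which is what is required for membership in a candidate.

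With these observations, two of the three conditions are almost immediate. Condition~\crt{} holds by the very definition of the closure operator: the inductive clause defining $\clo{\cdot}$ states precisely that a neutral term all of whose one-step reducts lie in $\clo{\red[*]{\sets}}$ is itself placed in $\clo{\red[*]{\sets}}$. For condition~\crd{} I would argue by case analysis on the reason a term $\t`:\clo{\red[*]{\sets}}$ belongs to the set. If $\t`:\red[*]{\sets}$, then any $\t'`:\red{\t}$ satisfies $\t'`:\red[*]{\sets}``(=\clo{\red[*]{\sets}}$ by the stability observation above. If instead $\t$ was added through the neutral clause, then by construction $\red{\t}``(=\clo{\red[*]{\sets}}$ already, so in particular every one-step reduct lies in the set. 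In both cases $\t`:\clo{\red[*]{\sets}}\ "=>"\ \red{\t}``(=\clo{\red[*]{\sets}}$.

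The remaining condition~\cru, namely $\clo{\red[*]{\sets}}``(=\sn$, is the only one requiring a genuine induction, and is where I expect the single real argument to lie. I would prove it by induction on the inductive construction of the closure. In the base case $\t`:\red[*]{\sets}``(=\sn$, so the claim is immediate. In the neutral case we have $\t`:\neu$ with $\red{\t}``(=\clo{\red[*]{\sets}}$, and the induction hypothesis yields $\red{\t}``(=\sn$; since a term is strongly normalising exactly when each of its one-step reducts is (any infinite reduction from $\t$ would, after its first step, produce an infinite reduction from some element of $\red{\t}$), we conclude $\t`:\sn$. The only delicate point is thus this characterisation of strong normalisation through one-step reducts, which is standard and, crucially, does not rely on finite branching and so is unaffected by the AC-rewriting of the $+$ operator.
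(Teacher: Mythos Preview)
Your argument is correct and is exactly the direct verification one would expect: \crt{} is immediate from the definition of the closure, \crd{} follows by case analysis on the two clauses of the inductive definition (using that $\red[*]{\sets}$ is already closed under one-step reduction), and \cru{} is obtained by structural induction on the closure, using that a term all of whose one-step reducts are strongly normalising is itself strongly normalising. The paper states this lemma without proof, so there is nothing to compare against; your write-up is the standard one and would fit seamlessly.
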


In order to interpret types with reducibility candidates, we define
the operators `arrow', `plus' and `intersection' in \cre: Let $\mathcal{A,B}`:\cre$.
  We define: $\mathcal{A}\to \mathcal{B} =
    \{\t`:\terms/\ `A\u`:\mathcal{A}, \t\u`:\mathcal{B}\}$ and $\mathcal{A}\mas\mathcal{B} =
    \clo{(\mathcal{A+B})\cup\mathcal{A}\cup\mathcal{B}}$
     where $\mathcal{A+B}=\{\t+\u~/~\t`:\mathcal{A}\mbox{ and }\u`:\mathcal{B}\}$.
\begin{proposition}
\label{prop:op-cr}
  Let $\mathcal{A,B}`:\cre$.
  Then both~$\mathcal{A\to B}$ and~$\mathcal{A\mas B}$ are
  reducibility candidates.
  Moreover, if $(\mathcal{A}_i)_{i`:I}$ is a family of \cre,
  then~$\bigcap_{i`:I}\mathcal{A}_i$ is a reducibility candidate.
\end{proposition}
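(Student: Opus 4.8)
The plan is to verify the three reducibility-candidate conditions \cru, \crd, \crt\ for each of the three operators separately, since the statement bundles together three independent claims. For all three I would first establish that the candidate sits inside \sn, then stability under reduction, and finally stability under neutral expansion; the last one is where the real work lies.

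For the arrow operator $\mathcal{A}\to\mathcal{B}$, the plan is as follows. For \cru, given $\t`:\mathcal{A}\to\mathcal{B}$, I would pick any $\u`:\mathcal{A}$ (the set is non-empty, since $\zero$ belongs to every candidate), observe $\t\u`:\mathcal{B}``(=\sn$ by \cru\ for $\mathcal{B}$, and conclude $\t`:\sn$ because any infinite reduction of $\t$ would lift to one of $\t\u$. For \crd, if $\t`:\mathcal{A}\to\mathcal{B}$ and $\t\to\t'$, then for every $\u`:\mathcal{A}$ we have $\t\u\to\t'\u$, so $\t'\u`:\mathcal{B}$ by \crd\ for $\mathcal{B}$, whence $\t'`:\mathcal{A}\to\mathcal{B}$. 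The delicate step is \crt: given $\t`:\neu$ with $\red{\t}``(=\mathcal{A}\to\mathcal{B}$, I must show $\t\u`:\mathcal{B}$ for every $\u`:\mathcal{A}$. I would argue by induction on the reduction of $\u$ (legitimate since $\u`:\sn$), examining the possible one-step reducts of the neutral redex $\t\u$. Because $\t$ is neutral (not an abstraction nor a sum of them), no $`b$-step nor distributivity step can fire at the root, so every reduct of $\t\u$ is either $\t'\u$ with $\t'`:\red{\t}$, or $\t\u'$ with $\u'`:\red{\u}``(=\mathcal{A}$ by \crd. In the first case $\t'\u`:\mathcal{B}$ since $\t'`:\mathcal{A}\to\mathcal{B}$; in the second $\t\u'`:\mathcal{B}$ by the induction hypothesis. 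Thus $\red{\t\u}``(=\mathcal{B}$, and since $\t\u$ is neutral, \crt\ for $\mathcal{B}$ gives $\t\u`:\mathcal{B}$.

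For the plus operator $\mathcal{A}\mas\mathcal{B}$, the key observation is that it is defined as a closure $\clo{\,\cdot\,}$ of a subset of \sn, so Lemma~\ref{lem:cre} essentially does the job: I would check that $(\mathcal{A+B})\cup\mathcal{A}\cup\mathcal{B}``(=\sn$ (each piece is in \sn\ by \cru\ for $\mathcal{A},\mathcal{B}$, and a sum $\t+\u$ of strongly normalising terms is strongly normalising) and that the set is stable enough that its closure coincides with $\clo{\red[*]{\cdot}}$, or more directly reprove \cru, \crd, \crt\ by induction on the closure. Condition \crt\ is immediate from the definition of $\clo{\,\cdot\,}$; \cru\ follows from the \sn\ bound; \crd\ requires checking that reducts of a sum $\t+\u$ (including the zero-rule reduct when one summand is $\zero$) stay in the closure, which I would handle by induction on the inductive definition of $\clo{\,\cdot\,}$.

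For the intersection $\bigcap_{i`:I}\mathcal{A}_i$, all three conditions are inherited pointwise and the verification is routine: \cru\ and \crd\ hold because membership in the intersection implies membership in each $\mathcal{A}_i$, and \crt\ holds because if $\t`:\neu$ and $\red{\t}``(=\bigcap_i\mathcal{A}_i$ then $\red{\t}``(=\mathcal{A}_i$ for each $i$, so $\t`:\mathcal{A}_i$ by \crt\ for that candidate, hence $\t`:\bigcap_i\mathcal{A}_i$. The main obstacle overall is the arrow case's \crt, where the neutrality of $\t$ must be used carefully to rule out root reductions and the argument proceeds by a well-founded induction on the \sn\ argument $\u$; the plus and intersection cases are comparatively mechanical once Lemma~\ref{lem:cre} is in hand.
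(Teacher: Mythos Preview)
The paper states this proposition without proof, so there is no reference argument to compare against; what follows is an assessment of your proposal on its own merits.

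Your treatment of the intersection is correct and standard. Your plan for $\mathcal{A}\mas\mathcal{B}$ is also essentially right: once one checks that $(\mathcal{A}+\mathcal{B})\cup\mathcal{A}\cup\mathcal{B}$ is already closed under reduction (using \crd\ for $\mathcal{A}$ and $\mathcal{B}$ and a short case analysis on the reducts of a sum, including the rule $\t+\zero\to\t$ and AC-rearrangements), the set equals its own $\red[*]$-closure and Lemma~\ref{lem:cre} applies directly. Your alternative of reproving \cru--\crt\ by induction on the closure also works.

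There is, however, a genuine gap in your \crt\ argument for $\mathcal{A}\to\mathcal{B}$. You assert that for neutral~$\t$ ``no $\beta$-step nor distributivity step can fire at the root'' of $\t\u$, and hence that every one-step reduct is of the form $\t'\u$ or $\t\u'$. This is false in the present calculus, for three reasons. First, the left distributivity $\t(\u_1+\u_2)\to\t\u_1+\t\u_2$ fires whenever the \emph{argument}~$\u$ is a sum, and nothing prevents an element of~$\mathcal{A}$ from being a sum. Second, the right distributivity $(\t_1+\t_2)\u\to\t_1\u+\t_2\u$ fires whenever~$\t$ is a sum, and neutral terms \emph{can} be sums: a pseudo-value is by definition a sum of \emph{abstractions} only, so e.g.\ the closed term $(\lambda x.x)(\lambda y.y)+\lambda z.z$ is a neutral sum. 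Third, the zero rules $\zero\u\to\zero$ and $\t\zero\to\zero$ also create root reducts (these two are harmless, since $\zero$ lies in every candidate, but you do not mention them).

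The problematic cases are the two distributivity reducts $\t\u_1+\t\u_2$ and $\t_1\u+\t_2\u$. They are neutral (sums of applications), so one is tempted to iterate \crt\ for~$\mathcal{B}$; the obstacle is that the pieces $\u_1,\u_2$ need not lie in~$\mathcal{A}$, and $\t_1,\t_2$ need not satisfy the hypothesis on~$\t$, so the inner induction hypothesis does not apply to them directly. Closing this gap requires a stronger induction than the single well-founded induction on~$\u$ that you propose: one typically inducts on a measure combining $|\t|+|\u|$ with the sum-structure of $\t$ and $\u$, and shows that every one-step reduct of $\t\u$---including the distributed ones---can be driven into~$\mathcal{B}$ by relating its further reducts back to instances of the induction hypothesis (e.g.\ a reduct $\t_1'\u+\t_2\u$ of $\t_1\u+\t_2\u$ is a reduct of $(\t_1'+\t_2)\u$, and $\t_1'+\t_2\in\red{\t}\subseteq\mathcal{A}\to\mathcal{B}$). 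As written, your case analysis is incomplete.
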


The operator~$+$ is commutative and associative on terms, and hence so is the operator~$+$ defined on sets of terms. Therefore, \mas\ is commutative and associative on reducibility candidates. In addition, $\clo{\emptyset}$ (a reducibility candidate according to Lemma~\ref{lem:cre}) is neutral with respect to \mas. Lemma~\ref{lem:assoc-mas} formalises these properties.
\begin{lemma}\label{lem:assoc-mas}
  Let~$\mathcal{A,B,C}`: \cre$.
  Then
    $\mathcal{A}\mas\mathcal{B} = \mathcal{B}\mas\mathcal{A}$,
    $(\mathcal{A}\mas\mathcal{B})\mas\mathcal{C} =
      \mathcal{A}\mas(\mathcal{B}\mas\mathcal{C})$ and
    $\mathcal{A}\mas\clo{\emptyset}=\mathcal{A}$.
\end{lemma}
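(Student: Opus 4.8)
The plan is to handle the three identities separately, leaning throughout on the inductive definition of the closure $\clo{\cdot}$ and on the fact that every candidate lies inside \sn. Commutativity is immediate: since terms are taken modulo associativity and commutativity of $+$, the set $\mathcal{A}+\mathcal{B}$ is literally $\mathcal{B}+\mathcal{A}$, and union is commutative, so $(\mathcal{A}+\mathcal{B})\cup\mathcal{A}\cup\mathcal{B}$ and $(\mathcal{B}+\mathcal{A})\cup\mathcal{B}\cup\mathcal{A}$ are the same set; applying $\clo{\cdot}$ yields $\mathcal{A}\mas\mathcal{B}=\mathcal{B}\mas\mathcal{A}$. For the remaining two identities I would first record the routine bookkeeping facts about the closure operator: it is monotone ($\sets\subseteq\mathcal{T}$ implies $\clo{\sets}\subseteq\clo{\mathcal{T}}$) and idempotent ($\clo{\clo{\sets}}=\clo{\sets}$), each by induction on the closure derivation; consequently any candidate $\mathcal{A}$ satisfies $\clo{\mathcal{A}}=\mathcal{A}$ (it is already \crt-closed), and $\clo{\clo{\sets}\cup\mathcal{T}}=\clo{\sets\cup\mathcal{T}}$. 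These let me freely collapse the nested closures produced once $\mas$ is unfolded.

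The heart of the argument is a single \emph{absorption lemma}: for a reduction-stable set $\sets\subseteq\sn$ and a candidate $\mathcal{C}$, if $b\in\clo{\sets}$ and $c\in\mathcal{C}$ then $b+c\in\clo{(\sets+\mathcal{C})\cup\sets\cup\mathcal{C}}$. I would prove it by a lexicographic induction on the closure derivation of $b$ together with the \sn-length of $c$. If $b\in\sets$ then $b+c\in\sets+\mathcal{C}$, which sits inside the target closure. Otherwise $b$ is neutral with $\red{b}\subseteq\clo{\sets}$; the key observation is that a sum one of whose summands is neutral is itself neutral, so $b+c\in\neu$ and we may apply \crt. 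Its one-step reducts are of three kinds: $b'+c$ with $b'\in\red{b}$, handled by the induction on $b$; $b+c'$ with $c'\in\red{c}\subseteq\mathcal{C}$ (by \crd), handled by the induction on $c$; and the top-level zero-rule reducts $b+\zero\to b$ or $\zero+c\to c$, which land in $\clo{\sets}$ respectively $\mathcal{C}$, both inside the target. Since no distributivity rule fires at the root of a sum, these exhaust the reducts, and \crt\ delivers $b+c$ in the target.

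Associativity then follows by showing that both parenthesisations equal the symmetric closure $\clo{\mathcal{G}}$, where $\mathcal{G}=\mathcal{A}\cup\mathcal{B}\cup\mathcal{C}\cup(\mathcal{A}+\mathcal{B})\cup(\mathcal{A}+\mathcal{C})\cup(\mathcal{B}+\mathcal{C})\cup(\mathcal{A}+\mathcal{B}+\mathcal{C})$. Unfolding $(\mathcal{A}\mas\mathcal{B})\mas\mathcal{C}$ and using the collapse facts reduces the inclusion $\subseteq$ to proving $(\mathcal{A}\mas\mathcal{B})+\mathcal{C}\subseteq\clo{\mathcal{G}}$, which is exactly the absorption lemma applied to $\sets=(\mathcal{A}+\mathcal{B})\cup\mathcal{A}\cup\mathcal{B}$ (reduction-stable because $\mathcal{A},\mathcal{B}$ are candidates), since $\sets+\mathcal{C}$ together with $\sets\cup\mathcal{C}$ is precisely $\mathcal{G}$. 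The reverse inclusion is the easy direction, each generator of $\mathcal{G}$ being visibly a member of $(\mathcal{A}\mas\mathcal{B})\mas\mathcal{C}$. Because $\mathcal{G}$ is symmetric in $\mathcal{A},\mathcal{B},\mathcal{C}$, the identical computation gives $\mathcal{A}\mas(\mathcal{B}\mas\mathcal{C})=\clo{\mathcal{G}}$, so the two sides coincide.

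Neutrality follows in the same spirit: for $\mathcal{A}\mas\clo{\emptyset}=\mathcal{A}$ the inclusion $\supseteq$ is immediate since $\mathcal{A}$ is a generator, and for $\subseteq$ I show the generating set sits inside $\mathcal{A}$ — here $\clo{\emptyset}\subseteq\mathcal{A}$ because $\clo{\emptyset}$ is the least candidate, and $\mathcal{A}+\clo{\emptyset}\subseteq\mathcal{A}$ by the same neutral-sum induction as above (collapsing now into $\mathcal{A}$ via its own \crt, using $\t+\zero\to\t$ at the root) — after which $\clo{\cdot}$ adds nothing, as $\mathcal{A}$ is already a candidate. I expect the absorption lemma to be the only real obstacle: getting the induction measure right and correctly accounting for the neutrality of sums and the zero-rule reducts is where the work lies, whereas everything else is monotonicity-and-idempotence bookkeeping.
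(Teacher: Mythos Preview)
Your proof is correct, and in fact considerably more careful than the paper's own treatment. The paper does not give a formal proof of this lemma at all: the only justification is the sentence preceding the statement, which argues that since $+$ is commutative and associative on terms it is so on sets of terms, and ``therefore'' $\mas$ is commutative and associative on candidates, with $\clo{\emptyset}$ neutral. This glosses over precisely the point you identify as the real work, namely why the closure operator $\clo{\cdot}$ commutes well enough with the set-level $+$ to transfer associativity to $\mas$. Your absorption lemma (for reduction-stable $\sets\subseteq\sn$ and a candidate $\mathcal{C}$, $\clo{\sets}+\mathcal{C}\subseteq\clo{(\sets+\mathcal{C})\cup\sets\cup\mathcal{C}}$) is exactly the missing ingredient, and your lexicographic induction on the closure derivation of $b$ and the \sn-height of $c$ is the right measure; the observation that a sum with a neutral summand is itself neutral (by inspection of the grammar of pseudo values) is what makes \crt\ applicable, and your case analysis of the root reducts of $b+c$ is complete for this AC rewrite system. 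The reduction of both parenthesisations to the symmetric $\clo{\mathcal{G}}$ and the neutrality argument are then straightforward consequences, as you say. In short: same underlying intuition as the paper, but you actually do the proof the paper only gestures at.
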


Type variables are interpreted using \emph{valuations}, \ie
partial functions from type variables to reducibility candidates:
$`r~:=~\emptyset~|~`r,X\mapsto\mathcal{A}.$
The interpretation \itp{\T} of a type~$\T$ in a valuation~$`r$ (that is defined for each free type variable of~$\T$) is given by
\begin{displaymath}
  \begin{array}[t]{r@{\ =\ }l@{\qquad}r@{\ =\ }l}
    \itp{X} & `r (X) &
    \itp{\tzero} & \clo{\emptyset}\\
    \itp{\U \to \T} & \itp{\U} \to \itp{\T} &
    \itp{\T+\R} & \itp{\T}\mas\itp{\R}\\
    \itp{`AX. \T} &
    \bigcap_{\mathcal{A} `:\cre}\;\itp[`r,X\mapsto\mathcal{A}]{\T}\\
  \end{array}
\end{displaymath}
Lemma~\ref{lem:cre} and Proposition~\ref{prop:op-cr} ensure that each type is interpreted by
a reducibility candidate.
Furthermore, Lemma~\ref{lem:assoc-mas} entails that this
interpretation is well defined with respect to the type equivalences.
\begin{lemma}
  \label{lem:itp-equiv}
  For any types~$\T,\T'$, and any valuation~$`r$, if $\T\!\equiv\!\T'$ then $\itp{\T}\!=\!\itp{\T'}$.
\end{lemma}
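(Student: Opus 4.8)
The plan is to prove Lemma~\ref{lem:itp-equiv} by induction on the derivation of the equivalence $\T \equiv \T'$. Since $\equiv$ is defined as the least congruence generated by the three axioms (commutativity, associativity, and neutrality of $\tzero$ for $+$), I would structure the proof around the usual closure conditions of a congruence: the three base axioms, reflexivity, symmetry, transitivity, and the congruence (compatibility) closure under the type constructors $+$, $\to$, and $\forall$.

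First I would dispatch the three base cases by appealing directly to Lemma~\ref{lem:assoc-mas}. For $\T+\R \equiv \R+\T$, the interpretation of both sides is $\itp{\T}\mas\itp{\R}$ and $\itp{\R}\mas\itp{\T}$ respectively, which are equal by commutativity of $\mas$. For associativity, $\itp{\T+(\R+\S)} = \itp{\T}\mas(\itp{\R}\mas\itp{\S})$ equals $(\itp{\T}\mas\itp{\R})\mas\itp{\S} = \itp{(\T+\R)+\S}$ by associativity of $\mas$. For $\T+\tzero \equiv \T$, we have $\itp{\T+\tzero} = \itp{\T}\mas\itp{\tzero} = \itp{\T}\mas\clo{\emptyset} = \itp{\T}$, using that $\clo{\emptyset}$ is the neutral element. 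The equivalence-relation cases (reflexivity, symmetry, transitivity) are immediate from the fact that we are asserting an equality of sets, which is itself reflexive, symmetric and transitive.

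Next I would handle the congruence closure. These are the cases where the equivalence is deduced from a smaller equivalence by placing it inside a constructor, and each reduces to the induction hypothesis together with the fact that the operators on \cre\ are well defined (single-valued) functions of their arguments. For the sum, if $\T \equiv \T'$ yields $\itp{\T}=\itp{\T'}$ by induction, then $\itp{\T+\R} = \itp{\T}\mas\itp{\R} = \itp{\T'}\mas\itp{\R} = \itp{\T'+\R}$, and symmetrically on the right argument. The arrow case is analogous, using that $\mathcal{A}\to\mathcal{B}$ depends only on $\mathcal{A}$ and $\mathcal{B}$. For the $\forall$ case, from $\U \equiv \U'$ the induction hypothesis must give $\itp[`r,X\mapsto\mathcal{A}]{\U} = \itp[`r,X\mapsto\mathcal{A}]{\U'}$ for \emph{every} $\mathcal{A}`:\cre$; taking the intersection over all $\mathcal{A}$ then yields $\itp{\forall X.\U} = \itp{\forall X.\U'}$.

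The main subtlety I anticipate is in the $\forall$ congruence step: the induction hypothesis needs to hold uniformly across all extended valuations $`r,X\mapsto\mathcal{A}$, not merely for the fixed valuation $`r$. This forces the statement to be understood as quantified over all valuations defined on the relevant free variables, so that the hypothesis is available after extending $`r$ at $X$; I would phrase the induction accordingly (for all $`r$ on which both types are defined) so that the quantifier case goes through. A second point worth checking is that both $\T$ and $\T'$ share the same free type variables, so that $`r$ is simultaneously a valid valuation for both sides; this is clear since the three generating axioms and the congruence rules all preserve the set of free variables. With these observations the induction is otherwise routine, the real content being entirely absorbed into Lemma~\ref{lem:assoc-mas}.
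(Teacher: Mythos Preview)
Your proposal is correct and matches the paper's intended argument: the paper does not spell out a proof but simply remarks that Lemma~\ref{lem:assoc-mas} entails the result, and your induction on the congruence derivation is exactly the routine unfolding of that claim, with Lemma~\ref{lem:assoc-mas} handling the three generating axioms and the compatibility of $\mas$, $\to$, and $\bigcap$ handling the congruence closure.
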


\paragraph{Adequacy lemma.}
We show that this interpretation complies with typing judgements.
Reducibility candidates deal with closed terms, whereas
proving the adequacy lemma by induction requires the use of open terms
with some assumptions on their free variables (which are ensured by the context).
Therefore we use \emph{substitutions}~$`s$ to close terms:
  $$`s := \emptyset \;|\; x \mapsto u;`s \qquad \qquad
  \t_{\emptyset} = \t\quad , \quad
  \t_{x \mapsto u;`s} = \t\subs{u}_{`s}.$$

Given a context $`G$, we say that a substitution~$`s$
\emph{satisfies}~$`G$ for the valuation~$`r$ (notation:~\satis{`s}{`G})
when~$(x:\T) `: `G$ implies $`s(x) `: \itp{\T}$.
A typing judgement $`G\vdash\t\type \T$ is said to be \emph{valid}
(notation $`G\valid\t\type \T$) if for every valuation~$`r$, and for
every substitution~$`s$ satisfying~$`G$ for~$`r$, we have $\t_{`s} `: \itp{\T}$.
\begin{proposition}[Adequacy]
  \label{prop:adequacy}
  Every derivable typing judgement is valid:
  for each~$`G$, each term~$\t$ and each type~$\T$, we have that
  $ `G\vdash\t\type \T$ implies $`G\valid\t\type \T$.
\end{proposition}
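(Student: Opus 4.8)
The plan is to prove the statement by induction on the derivation of $\Gamma\vdash\t\type\T$, fixing once and for all an arbitrary valuation $\rho$ and an arbitrary substitution $\sigma$ that satisfies $\Gamma$ for $\rho$, and checking for each typing rule that its conclusion is valid whenever the premises are. Four cases are immediate. For $ax$ we have $x_\sigma=\sigma(x)\in\itp{\U}$ exactly because $\sigma$ satisfies $x\type\U$. For $ax_\tzero$, $\zero_\sigma=\zero$ lies in $\clo{\emptyset}=\itp{\tzero}$, since $\zero$ is neutral without reducts and hence belongs to every candidate by \crt. The rule $\equiv$ is settled by Lemma~\ref{lem:itp-equiv}, which gives $\itp{\T}=\itp{\R}$ as soon as $\T\equiv\R$. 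Finally $+_I$ follows from the definition of \mas: by induction hypothesis $\t_\sigma\in\itp{\T}$ and $\u_\sigma\in\itp{\R}$, so $(\t+\u)_\sigma=\t_\sigma+\u_\sigma\in(\itp{\T}+\itp{\R})\subseteq\itp{\T}\mas\itp{\R}=\itp{\T+\R}$.

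Before the quantifier rules I would establish two semantic lemmas that transpose the familiar System~$F$ facts to this interpretation, each by a routine induction on the type: a free-variable lemma, $\itp[\rho,X\mapsto\mathcal{A}]{\T}=\itp{\T}$ whenever $X\notin FV(\T)$, and a semantic substitution lemma, $\itp{\T\tsubs{\V}}=\itp[\rho,X\mapsto\mathcal{A}]{\T}$ with $\mathcal{A}=\itp{\V}$ (a candidate by Lemma~\ref{lem:cre} and Proposition~\ref{prop:op-cr}). With these, $\forall_I$ reduces to showing $\t_\sigma\in\itp[\rho,X\mapsto\mathcal{A}]{\U}$ for every $\mathcal{A}\in\cre$; since $X\notin FV(\Gamma)$, the free-variable lemma ensures $\sigma$ still satisfies $\Gamma$ under the extended valuation, the induction hypothesis applies for each $\mathcal{A}$, and intersecting yields membership in $\itp{\forall X.\U}$. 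Dually, $\forall_E$ is obtained by instantiating the intersection $\itp{\forall X.\U}=\bigcap_{\mathcal{A}\in\cre}\itp[\rho,X\mapsto\mathcal{A}]{\U}$ at $\mathcal{A}=\itp{\V}$ and rewriting with the substitution lemma to recover $\itp{\U\tsubs{\V}}$.

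The substance of the proof is in the two arrow rules, where the \cbv\ strategy and the distributivity rules interact. For $\to_I$ I must show $(\lambda x.\t)_\sigma=\lambda x.\t_\sigma\in\itp{\U}\to\itp{\T}=\itp{\U\to\T}$, i.e.\ that $(\lambda x.\t_\sigma)\u\in\itp{\T}$ for every $\u\in\itp{\U}$. Since each $\itp{\U}$ is a non-empty subset of \sn\ and, by the induction hypothesis on the premise, every instance $\t_\sigma\subs{w}$ obtained from an element $w\in\itp{\U}$ lies in $\itp{\T}\subseteq\sn$, the relevant subterms are strongly normalising, which makes well-founded an induction on the maximal reduction lengths of $\t_\sigma$ and of $\u$. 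In that induction $(\lambda x.\t_\sigma)\u$ is neutral, so by \crt\ it suffices to place each of its one-step reducts in $\itp{\T}$: the internal reducts of $\t_\sigma$ and of $\u$ are handled by \crd\ and the inductive measure, and the $\beta$-step taken when $\u$ is a value $\v$ is handled by the instance $\t_\sigma\subs{\v}\in\itp{\T}$ coming from the induction hypothesis. The rule $\to_E$ follows the same pattern but requires a distribution lemma at the level of candidates: that applying a term of $\itp{\sui{\alpha}\forall\vec X.(\U\to\T_i)}$ to a term of $\itp{\suj{\beta}\U\tsubs[\vec X]{\vec\V_j}}$ gives a term of $\itp{\sui{\alpha}\suj{\beta}\T_i\tsubs[\vec X]{\vec\V_j}}$, proved again by a \crt-driven induction that tracks how a sum of functions applied to a sum of arguments distributes, instantiates, and finally $\beta$-reduces, with Lemma~\ref{lem:assoc-mas} reconciling the bracketing of the iterated \mas.

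The main obstacle is precisely the distributivity reducts inside these inductions. When the argument $\u\in\itp{\U}$ is (or reduces to) a sum $\u_1+\u_2$, the redex $(\lambda x.\t_\sigma)\u$ rewrites to $(\lambda x.\t_\sigma)\u_1+(\lambda x.\t_\sigma)\u_2$, yet membership of $\u_1+\u_2$ in the candidate $\itp{\U}$ does not by itself place $\u_1$ and $\u_2$ in $\itp{\U}$; symmetrically for a sum of functions in $\to_E$. The resolution I expect is to keep the whole application together rather than decompose it, letting the distribution lemma for \mas---together with the fact that the product-of-sums appearing on the type side is matched step by step by the distributed reduct on the term side---absorb each distributivity step, closing under \crt\ at every stage. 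Getting this bookkeeping to line up for arbitrary $\alpha$ and $\beta$, while respecting the value restriction on $\beta$ so that the simultaneous arrow/$\forall$ elimination of $\to_E$ comes out semantically sound, is the delicate part; the quantifier and structural cases above are then routine.
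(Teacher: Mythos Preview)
The paper does not spell out a proof of the Adequacy proposition, so there is no detailed argument to compare against; I therefore assess your proposal on its own terms. Your treatment of $ax$, $ax_{\tzero}$, $\equiv$, $+_I$, $\forall_I$ and $\forall_E$ is the standard one and is correct.

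The gap is in the $\to_I$ case. In your inner induction showing $(\lambda x.\t_\sigma)\u\in\itp{\T}$ you enumerate only three kinds of reducts (internal in $\t_\sigma$, internal in $\u$, and $\beta$ when $\u$ is a value), omitting the left-distributivity reduct $(\lambda x.\t_\sigma)\u_1+(\lambda x.\t_\sigma)\u_2$ that fires when $\u=\u_1+\u_2$. You notice this in your final paragraph, but the fix you suggest---``let the distribution lemma for $\mas$ absorb the step''---does not apply here: $\T$ in the premise of $\to_I$ is an arbitrary type, so $\itp{\T}$ carries no $\mas$-structure to distribute into. More precisely, the lemma you are implicitly relying on, namely
\[
\bigl(\forall\ \text{values}\ \v\in\mathcal A,\ \s\{\v/x\}\in\mathcal B\bigr)\ \Longrightarrow\ \lambda x.\s\in\mathcal A\to\mathcal B,
\]
is \emph{false} for the paper's candidates. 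Take $\mathcal A=\sn$, let $\mathcal B=\clo{\red[*]{\{\lambda x.\t\in\sn\}}}$ (a candidate by Lemma~\ref{lem:cre}), and $\s=x$: the hypothesis holds since every value lies in $\mathcal B$, yet for $\u=\v_1+\v_2$ one gets $(\lambda x.x)\u\to^*\v_1+\v_2\notin\mathcal B$, because $\v_1+\v_2$ is a pseudo-value and hence is neither in the base set nor added by the neutral closure. So the \crt-driven induction cannot be closed at this reduct with the information you have recorded.

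This does not refute adequacy---the counterexample pair $(\mathcal A,\mathcal B)$ is not of the form $(\itp{\U},\itp{\T})$ for an actual derivation $\Gamma,x{:}\U\vdash\t:\T$---but it shows that the argument cannot be factored through a general candidate lemma of that shape. What is missing is either an extra closure condition on candidates (so that a pseudo-value $\b_1+\b_2$ in a candidate forces $\b_1$ and $\b_2$ into it), or a strengthened inner induction whose hypothesis speaks about the whole sum decomposition of $\u$ rather than one summand at a time. The same issue recurs, in dual form, in your $\to_E$ case: knowing $\t_\sigma\in\itp{\U\to\T_i}\mas\cdots$ does not, with the paper's definitions, let you peel off the summands of $\t_\sigma$ once the right-distributivity rule fires. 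Your final paragraph correctly locates the obstacle but does not supply the missing ingredient.
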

This immediately provides the strong normalisation result:
\begin{theorem}[Strong normalisation]\label{thm:strongnormalisation}
  Every typable term in \add\ is strongly normalising.
\end{theorem}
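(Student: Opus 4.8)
The plan is to read strong normalisation off directly from the Adequacy Proposition~\ref{prop:adequacy} together with condition~\cru. Let \t\ be a typable term, say $\Gamma\vdash\t\type\T$ with $\Gamma=x_1\type\U_1,\dots,x_n\type\U_n$. By adequacy we obtain $\Gamma\valid\t\type\T$, so it suffices to exhibit one valuation~$\rho$ and one substitution~$\sigma$ satisfying $\Gamma$ for~$\rho$: then $\t_\sigma\in\itp[\rho]{\T}$, and since $\itp[\rho]{\T}$ is a reducibility candidate, \cru\ gives $\t_\sigma\in\sn$, that is, $\t_\sigma$ is strongly normalising. Such a closing substitution always exists, because every interpretation is a non-empty candidate (it contains at least~\zero, which lies in every candidate by~\crt).

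The remaining task is to transfer strong normalisation from the closed instance~$\t_\sigma$ back to the open term~\t. I would do this by choosing $\sigma$ to map each variable to a closed \emph{value} of the corresponding interpretation. The point of insisting on values is that value substitution preserves reductions: for the distributivity and zero rules this is immediate, and for a $\beta$-step $(\lambda x.\u)\v\to\u\subs{\v}$ the argument~$\v_\sigma$ is again a value, so the redex survives and $\t\to\t'$ lifts to $\t_\sigma\to\t'_\sigma$ (using the substitution lemma). Consequently any infinite reduction issued from~\t\ would induce an infinite reduction issued from~$\t_\sigma$, contradicting $\t_\sigma\in\sn$; hence \t\ is strongly normalising.

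The main obstacle is therefore the construction of a closing substitution made of values that still satisfies~$\Gamma$: one must show that the interpretation of each context type~$\U_i$ contains a closed value. I would fix the valuation~$\rho$ sending every type variable to~\sn\ (which does contain values, \eg $\lambda x.x$) and prove inhabitation by induction on the unit type. The delicate case is the universal quantifier: since $\itp[\rho]{\forall X.\U}$ is an intersection ranging over \emph{all} candidates --- including value-free ones such as $\clo{\emptyset}$ --- one must produce a single value lying uniformly in every $\itp[\rho,X\mapsto\mathcal A]{\U}$, and this is where the argument needs the most care. It is precisely to avoid this interaction that one cannot simply substitute~\zero: a zero rule would then collapse $(\lambda x.\u)\zero$ to~\zero, erasing an infinite computation that the original $\beta$-redex would have preserved, so the back-transfer would break.
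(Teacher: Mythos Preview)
Your suspicion about the universal case is fatal for the approach you chose. Take $\U=\forall X.X$: then $\itp[\rho]{\forall X.X}=\bigcap_{\mathcal A\in\cre}\mathcal A\subseteq\clo\emptyset$, and $\clo\emptyset$ consists solely of neutral terms, so it contains no abstraction at all. Hence there is no hope of producing, for an arbitrary context, a closing substitution made of \emph{values} that still satisfies the context; the back-transfer step you outlined cannot be completed along these lines. (Your diagnosis of why substituting~$\zero$ fails is correct, so neither option survives.)

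The paper sidesteps the whole issue. Its proof is literally the one-line closed case: with empty context the empty substitution trivially satisfies~$\Gamma$, adequacy gives $\t\in\itp[\emptyset]{\T}$, and~\cru\ finishes. For an open term one does \emph{not} look for inhabitants of the context types; one instead closes the term by $\lambda$-abstraction. If $x_1{:}\U_1,\dots,x_n{:}\U_n\vdash\t:\T$, then repeated~$\to_I$ yields $\vdash\lambda x_1\cdots\lambda x_n.\t:\U_1\to\cdots\to\U_n\to\T$ in the empty context; the closed case gives strong normalisation of this abstraction, and since any step $\t\to\t'$ lifts to $\lambda\vec x.\t\to\lambda\vec x.\t'$, strong normalisation of~$\t$ follows. (Free type variables in~$\T$ are handled the same way via~$\forall_I$, or by picking~$\rho$ constantly equal to~$\sn$.) This is the standard reducibility trick for open terms, and it makes the value-inhabitation question disappear.
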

\begin{proof}
  If a term~$\t$ is typable by a type~$\T$, then the adequacy lemma
  ensures that $\t`: \itp[\emptyset]{\T}$.
  As a reducibility candidate,~$\itp[\emptyset]{\T}$ is included
  in~\sn, and thus~$\t$ is strongly normalising. \qed
\end{proof}

\section{Logical Interpretation}
\label{sec:trans}
In this section, we interpret the \add\ type system into System $F$ with pairs (\F\ for short). Sum types are interpreted with Cartesian products. Since this product is neither associative nor commutative in \F, we first consider \add\ without type equivalences. This involves a slightly modified but equivalent type system, that we call \sadd. We then translate every term of \sadd\ into a term of \F. Finally, we show that our translation is correct with respect to typing in \add\ (Theorem~\ref{thm:type-trans}) and reduction (Theorem~\ref{theo:cor-red}).
\medskip

\paragraph{Structured Additive Type System.}
The system \sadd\ is defined with the same grammar of types as \add, and the same rules $ax$, $ax_{\tzero}$, $\to_I$, $+_I$, $`A_I$ and $`A_E$. There is no type equivalence, and thereby no commutativity nor associativity for sums (also $\tzero$ is not neutral for sums). Hence rule~$\to_E$, has to be precised. To specify what an $n$-ary sum is, we introduce the structure of trees for types.

\noindent
\parbox{0.843\textwidth}{
\begin{example}
  In \sadd, the type $(\U_1 + (\tzero+\U_2)) + \U_3$ is no longer equivalent to $\U_1+(\U_2+\U_3)$.
We can represent the first one by the labelled tree on the right.
\end{example}
}
\parbox{0.15\textwidth}{
  \vspace{-1pt}
  \hfill\begin{tikzpicture}[scale = 0.7]
    \coordinate
    child {child {node[punto=\U_1]{}}
      child {child {node[punto=\tzero]{}}
                      child {node[punto=\U_2]{}}}}
                  child {node[punto=\U_3]{}};
  \end{tikzpicture}
}
\\
\parbox[b]{0.183\textwidth}{
  \begin{tikzpicture}
    \coordinate
    child [very thick]{child [thin]{node[punto=\ell]{}}
      child {child [thin]{node[punto=\Az]{}}
                              child {node[punto=\ell]{}}}}
   child [thin]{node[punto=\ell]{}};
 \end{tikzpicture}
}
\parbox[b]{0.81\textwidth}{
  \quad
  To formalise \sadd, we use the standard representation of binary   trees, with some special leaves~$\ell$ (which can be labelled by a unit type):
\quad
\begin{math}
  \A,\A' := \ell~|~\Az~|~\As(\A,\A')~.
\end{math}

Each leaf is denoted by the finite word on the alphabet $\{\mathtt{l,r}\}$ (for \textit{left} and \textit{right}) representing the path from the root of the tree.
For instance, the type $(\U_1+(\tzero+\U_2))+\U_3$ is obtained using the labelling
\begin{math}
  \{\mathtt{ll}\mapsto \U_1,
  \mathtt{lrr}\mapsto \U_2,
  \mathtt{r}\mapsto \U_3
  \}
\end{math},
with the tree of the left.
}

We say that a labelling function~\lab\ (formally, a partial function from $\{\mathtt{l,r}\}^*$ to unit types) \emph{labels a tree~\A} when each of its leaves~$\ell$ is in the domain of~\lab.
In this case, we write $\A[\lab]$ the type of \sadd\ obtained by labelling~\A\ with~\lab.
Notice that conversely, for any type~$T$, there exists a unique tree~$\A_T$ and a labelling function~$\lab_T$ such that $T=\A_T[\lab_T]$.
The tree composition $\A`o\A'$ consists in ``branching'' $\A'$ to
each leaf~$\ell$ of~\A\ (\cf Example~\ref{ex:tree-compo} in Appendix~\ref{ap:ex-sadd}).
By extending the definition of labelling functions to functions from leaves to types, we have
$\A[w\mapsto\A'[\lab]] = \A`o\A'[wv\mapsto \lab(v)]$,
where~$w$ denotes a $\ell$-leaf of~\A, and~$v$ a $\ell$-leaf of $\A'$.
Then the rule for the arrow elimination in \sadd\ is:
$$\prooftree`G\vdash \t:\A[w\mapsto\forall\vec X.(\U\!\to T_w)]
			\qquad
			`G\vdash \u:\A'[v\mapsto \U\tsubs[\vec X]{\vec V_v}]
\justifies`G\vdash \t\u: \A`o\A'[wv\mapsto T_w\tsubs[\vec X]{\vec V_v}]
\using\estruct
\endprooftree$$

\noindent
where $wv$ is a word whose prefix~$w$ represents a leaf of \A~(cf.~Example~\ref{ex:tree-arrow-elim}).

\begin{proposition}[\add\ equivalent to \sadd]
  \label{prop:equiv-addstruct}~
    $`G\vdash \t: T$ is derivable in \add\ if and only if there is a type~$T'\equiv T$ such that $`G\vdash\t: T'$ is derivable in \sadd.
\end{proposition}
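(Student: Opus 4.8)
The statement is a bi-implication, and I would prove each direction separately, in both cases by induction on the relevant typing derivation. The two systems share the rules $ax$, $ax_\tzero$, $\to_I$, $+_I$, $\forall_I$ and $\forall_E$, and differ only in that \add\ carries the equivalence rule $\equiv$ together with the flat $n$-ary elimination $\to_E$, whereas \sadd\ drops $\equiv$ and replaces $\to_E$ by the tree-structured $\to_{E'}$. The entire content of the proposition is therefore to relate $\to_E$ with $\to_{E'}$ and to show that applications of $\equiv$ can be confined to the root of a derivation. The bridge between the two formulations of arrow elimination is the observation that reading a labelled tree modulo $\equiv$ collapses it to the flat sum of its $\ell$-leaf labels: the $\Az$-leaves disappear by neutrality of $\tzero$, and the bracketing and order of the remaining leaves are irrelevant by associativity and commutativity.

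For the ``if'' direction I would show, by induction on the \sadd\ derivation, that $`G\vdash\t:T'$ in \sadd\ implies $`G\vdash\t:T'$ in \add; a final $\equiv$ step then turns $T'$ into $T$ using $T'\equiv T$. The shared rules transfer verbatim, so the only genuine case is $\to_{E'}$. By the collapse just described, the premises $\A[w\mapsto\forall\vec X.(\U\to T_w)]$ and $\A'[v\mapsto\U\tsubs[\vec X]{\vec V_v}]$ are $\equiv$-equal to sums of exactly the shape that \add's $\to_E$ requires, with the common domain $\U$ and prefix $\vec X$ already in place, and the \sadd\ conclusion $\A`o\A'[wv\mapsto T_w\tsubs[\vec X]{\vec V_v}]$ is $\equiv$-equal to the double sum $\sum_w\sum_v T_w\tsubs[\vec X]{\vec V_v}$ produced by $\to_E$. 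Wrapping one instance of $\to_E$ between two uses of $\equiv$ therefore simulates $\to_{E'}$.

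For the ``only if'' direction I would prove, by induction on the \add\ derivation, that there is a type $T'\equiv T$ with $`G\vdash\t:T'$ in \sadd. The $\equiv$ rule is absorbed by transitivity of $\equiv$, and $ax$, $ax_\tzero$, $+_I$ and $\to_I$ go through directly, using that $\equiv$ is a congruence to carry the equivalence to the conclusion. For $\to_E$ I would exploit the correspondence between types and labelled trees: by induction the premises have \sadd\ types $T'_1\equiv\sum_i\forall\vec X.(\U\to\T_i)$ and $T'_2\equiv\sum_j\U\tsubs[\vec X]{\vec V_j}$, and since $\equiv$ preserves the multiset of non-$\tzero$ unit summands and never alters the outermost constructor of a unit type, each $\ell$-leaf of $T'_1$ is an arrow with prefix $\vec X$ and a domain $\equiv$-equal to $\U$, and each $\ell$-leaf of $T'_2$ an instance of $\U$. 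Modulo the domain alignment discussed below, reading off the trees $\A$ and $\A'$ lets me apply $\to_{E'}$, whose conclusion $\A`o\A'[wv\mapsto T_w\tsubs[\vec X]{\vec V_v}]$ is, after deleting its $\tzero$-leaves, $\equiv$-equal to the double sum of the original $\to_E$.

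The main obstacle is the treatment of the quantifier rules $\forall_I$ and $\forall_E$ in this second direction, together with the related need to present the two $\to_E$-premises over a common domain $\U$ and prefix $\vec X$. All three demand rigid type shapes --- a unit type for $\forall_I$ and $\forall_E$, a uniform tree for $\to_{E'}$ --- whereas the induction hypothesis only offers a \sadd\ type $\equiv$-equivalent to the desired one, which may be $\tzero$-padded: for instance \sadd\ assigns $\t_0+\zero$ only sums of the form $\W+\tzero$ and never a unit type, and likewise $(\s+\zero)\u$ can only receive an application type carrying a spurious $\tzero$-leaf. Since \sadd\ has no $\equiv$ rule to erase such padding, one cannot simply apply $\forall_I$ at the point where \add\ does. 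The remedy is a percolation lemma: generalisation, and dually instantiation, may be pushed down through the sum and application structure until it reaches a position where it genuinely applies --- a summand, a variable, or the body of an abstraction sitting under an arrow --- the $\tzero$-components being absorbed throughout by equivalences such as $(\forall X.\U)+\tzero\equiv\forall X.\U$. Establishing this lemma, and using the same tree/sum correspondence to realign the domains in the $\to_E$ case, is where essentially all of the technical effort concentrates; conceptually it amounts to showing that every use of $\equiv$ can be commuted down to the root of the derivation.
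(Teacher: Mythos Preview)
The paper states this proposition without proof, so there is no argument to compare against; it is simply asserted and then used to transport Theorem~\ref{theo:sadd-F} back to \add. Your proposal is therefore not competing with anything in the text, and the right question is whether it stands on its own.

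Your overall plan is sound and you have correctly located the only real work. The ``if'' direction is indeed routine: each instance of $\to_{E'}$ is, after flattening the trees via $\equiv$, an instance of $\to_E$, and the remaining rules coincide. For the ``only if'' direction you have put your finger on the genuine obstacle, namely that the induction hypothesis hands you a \sadd\ type that may be $\tzero$-padded (e.g.\ $\U+\tzero$ instead of $\U$), so that the unit-type premiss of $\forall_I$/$\forall_E$ and the uniform-domain premiss of $\to_{E'}$ are not literally met. Your ``percolation'' idea---push the quantifier introduction/elimination through $+_I$ and through the tree structure of $\to_{E'}$ until it lands on an actual unit summand, absorbing the $\tzero$ branches via $(\forall X.\U)+\tzero\equiv\forall X.\U$---is the right manoeuvre and does work; the same pushing-down argument handles the domain alignment in the $\to_E$ case, since two $\equiv$-equal unit types can only differ by $\tzero$-padding buried inside arrow targets, and one can re-derive the function side in \sadd\ with the desired domain by percolating the adjustment into the body of the corresponding abstraction.

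One point worth making explicit, which you gesture at but do not state cleanly: the induction hypothesis as you phrase it (``there exists some $T'\equiv T$ derivable in \sadd'') is slightly too weak to make the $\to_E$ case go through smoothly, because you need to choose the two premises' \sadd\ types so that their domain parts match \emph{syntactically}. A clean fix is to strengthen the statement proved by induction to: for every $T'\equiv T$ whose $\tzero$-leaf pattern is compatible with the term structure of $\t$, the judgement $\Gamma\vdash\t:T'$ is derivable in \sadd. With that strengthening the $\to_E$ and $\forall$ cases become straightforward, and the percolation lemma is absorbed into the inductive statement itself.
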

\paragraph{Translation into the System~$F$ with Pairs.}
We recall the syntax of \F~\cite{Dicosmo95}:
$$  \begin{array}
    {@{\hspace{40pt}}l@{\qquad}r@{\;}r@{\quad}l}
    \mathbf{Terms}: & t,u & := &
    x~|~`lx.t~|~tu~|~\star~|~\pair t u~|~\pil{t}~|~\pir{t}\\
\mathbf{Types}: & A,B & := & X~|~A \tof B~|~`AX.A~|~\uno~|~A`*B\\
\end{array}$$
(reduction and typing rules are well known, \cf Fig.~\ref{fig:sysF} on Appendix~\ref{ap:trad}).

In the same way than for the types, we define a term of~\F\ with a tree (whose binary nodes~\S\ are seen as pairs) and a partial function~$`t$ from $\{\mathtt{l,r}\}^*$ to $F_P$-terms.
We write~$`p_{`a_1\dots `a_n}(t)$ for~$`p_{`a_1}(`p_{`a_2}(\dots`p_{`a_n}(t)))$ (with $`a_i`:\{\mathtt{l,r}\}$).
Remark that if $t=\A[`t]$ and $w$ is a $\ell$-leaf of $\A$,
then~$`t(w)$ is a subterm of~$t$ that can be obtained by
reducing~$`p_{\overline w}(t)$, where~$\overline w$ is the mirror word of~$w$ (\cf~Example~\ref{ex:FPterms-trees}).

\noindent
\parbox[t]{.84\linewidth}{
  \begin{example}[Representation of $F_P$-terms with trees]
    \label{ex:FPterms-trees}\\
    Let~$t=\pair{\pair{u_1}{\pair{u_2}{u_3}}}{\star}$.
    Then~$t=\A[\mathtt{l\!l}\mapsto u_1,\mathtt{l\!r\!l}\mapsto u_2,
    \mathtt{l\!r\!r}\mapsto u_3]$
    (where \A\ is the tree on the right)
    and~$u_3$ reduces from~$`p_{221}(t)$.
  \end{example}
}
\parbox[t]{0.14\textwidth}{
  \begin{tikzpicture}[baseline=10pt,scale = 0.7]
    \coordinate
    child { child {node[punto=\ell]{}}
            child {child {node[punto=\ell]{}}
                   child {node[punto=\ell]{}}}}
    child {node[punto=\Az]{}};
  \end{tikzpicture}
}

\vspace{-4ex}
Every type~$T$ is interpreted by a type $\trad{T}$ of \F.
$$\begin{array}{c}
\trad X = X, \quad
\trad{\tzero} = \uno,\quad
\trad{`AX.\U} = `AX.\trad \U,\quad
\\
\trad{\U\to T} = \trad \U\tof\trad T, \quad
\trad{T+R} =\trad T`*\trad R.
\end{array}$$

\noindent
Then any term~$\t$ typable with a derivation~$\D$ is interpreted by a $F_P$-term~$\tradt{\t}$:

\noindent If $\mathcal{D}=\dfrac{}{`G,x:T\vdash x:T}ax$, then $\tradt{x}=x$.
    \medskip

\noindent If $\mathcal{D}=\dfrac{}{`G\vdash\ve 0:\tzero}\,ax_{\tzero}$, then $\tradt{\ve 0}=\star$.
	\medskip

\noindent If $\mathcal{D}=\dfrac{\mathcal{D}_1\qquad\mathcal{D}_2}{`G\vdash\ve{\t+\u}:T+R}+_I$, then $\tradt{\t+\u}= \pair{\tradt[\mathcal{D}_1]{\t}}{\tradt[\mathcal{D}_2]{\u}}$.
	\medskip

\noindent If $\mathcal{D}=\dfrac{\mathcal{D'}}{`G\vdash`lx.\t:\U\to T}\to_I$, then $\tradt{`lx.\t}=`lx.\tradt[\mathcal{D'}]{\t}$.
	\medskip

\noindent If $\mathcal{D}=\dfrac{\mathcal{D}_1\qquad\mathcal{D}_2}{`G\vdash\t\u:\A`o\A'[wv\mapsto T_w\tsubs[\vec X]{\vec V_v}]}\!\estruct$,

	\hfill then $\tradt{\t\u}=\A`o\A'[wv\mapsto`p_{\overline w}(\tradt[\mathcal{D}_1]{\t})`p_{\overline v}(\tradt[\mathcal{D}_2]{\u})]$.

\noindent If $\mathcal{D}=\dfrac{\mathcal{D'}}{`G\vdash\t:`AX.\U}`A_I$, then $\tradt{\t}=\tradt[\mathcal{D'}]{\t}$.
	\medskip

\noindent If $\mathcal{D}=\dfrac{\mathcal{D'}}{`G\vdash \t:\U\tsubs{V}}`A_E$, then $\tradt{\t}=\tradt[\mathcal{D'}]{\t}$.
	\medskip

This interpretation is in fact a direct translation of sums by pairs
at each step of the derivation, except for the application:
informally, all the distributivity redexes are reduced before the translation of a term~\t\u, which requires to `know' the sum structure of~\t\ and~\u.
This structure is actually given by their type, and that is why we can only interpret typed terms.
\begin{example}
  If \t\ has type $(\U→\T_1)+(\U→\T_2)$ and \u\ has type $(\U+\tzero)+\U$, then we see them as terms of shape $\t_1+\t_2$ and $(\u_1+\zero)+\u_2$ respectively (the reducibility model of section~\ref{sec:sr-sn} ensures that they actually reduce to terms of this shape).
Indeed, the translation of \t\u\ reduces to the translation of
\linebreak[4]
\begin{math}
  \big(((\t_1\u_1)+\zero)+\t_1\u_2\big) +     \big(((\t_2\u_1)+\zero)+\t_2\u_2\big)
\end{math}:

\hfil\hfil
 \begin{math}
   \tradt{\t\u} =
   \pair{\
     \pair{\,
       \pair{t_1u_1}{\star}
       \,}{\,
       t_1u_2}
     \,}{\,
     \pair{\,
       \pair{t_2u_1}{\star}
       \,}{\,
       t_2u_2}
     }
 \end{math},
\\
where
$t_1=`p_{11}(\tradt[\mathcal{D}_1]{\t})$,\;$t_2=`p_{21}(\tradt[\mathcal{D}_1]{\t})$, $u_1=`p_{1}(\tradt[\mathcal{D}_2]{\u})$, and
$u_2=`p_{12}(\tradt[\mathcal{D}_2]{\u})$
\end{example}
\begin{theorem}[Correction with respect to typing]
  \label{theo:sadd-F}
  If a judgement~$`G\vdash\t:T$ is derivable in \sadd\ with
  derivation~$\mathcal{D}$, then $\trad{`G}\thesi\tradt{\t}:\trad{T}$.
\end{theorem}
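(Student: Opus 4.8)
The plan is to prove Theorem~\ref{theo:sadd-F} by induction on the derivation~$\D$ in \sadd. For each typing rule, I must check that the translation $\tradt{\t}$ (as defined by the case analysis on~$\D$) is well-typed in \F\ with the translated type $\trad{T}$, under the translated context $\trad{`G}$. Since the translation is defined case-by-case following exactly the rules of \sadd, the structure of the induction mirrors the structure of the translation definition, and most cases should reduce to applying the corresponding \F\ typing rule after invoking the induction hypothesis.

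First I would dispatch the straightforward cases. For $ax$, the translation is $x$ and the \F\ axiom gives $\trad{`G},x:\trad T\thesi x:\trad T$ directly. For $ax_{\tzero}$, the translation is~$\star$, which has type~$\uno=\trad{\tzero}$ by the unit-introduction rule of~\F. For $+_I$, the translation is $\pair{\tradt[\D_1]{\t}}{\tradt[\D_2]{\u}}$; applying the induction hypothesis to~$\D_1$ and~$\D_2$ yields $\trad{`G}\thesi\tradt[\D_1]{\t}:\trad T$ and $\trad{`G}\thesi\tradt[\D_2]{\u}:\trad R$, and the pairing rule gives the product type $\trad T`*\trad R=\trad{T+R}$. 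The rule $\to_I$ translates to $`lx.\tradt[\D']{\t}$, handled by the induction hypothesis on~$\D'$ (typing $\tradt[\D']{\t}:\trad T$ in the extended context) followed by arrow-introduction in~\F, matching $\trad{\U\to T}=\trad\U\tof\trad T$. The quantifier rules $`A_I$ and $`A_E$ leave the translated term unchanged, so they correspond to the $`A$-introduction (with the side condition $X\notin FV$ transferring to $\trad{`G}$) and $`A$-elimination of~\F; here I must check that type substitution commutes with translation, i.e.\ that $\trad{\U\tsubs{V}}=\trad\U\tsubs{\trad V}$, which is a routine structural lemma on the type translation worth stating explicitly.

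\textbf{The main obstacle} is the arrow-elimination case~$\estruct$, where the translation $\tradt{\t\u}=\A`o\A'[wv\mapsto`p_{\overline w}(\tradt[\D_1]{\t})\,`p_{\overline v}(\tradt[\D_2]{\u})]$ builds a tree of pairs rather than a single application. The difficulty is that the \sadd\ rule performs \emph{simultaneous} application across all summands: $\t$ has type $\A[w\mapsto\forall\vec X.(\U\to T_w)]$ and $\u$ has type $\A'[v\mapsto\U\tsubs[\vec X]{\vec V_v}]$, and the result is indexed by the composite tree $\A`o\A'$. I would proceed by a nested induction (or a separate lemma) on the tree structures~$\A$ and~$\A'$. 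The key auxiliary fact is that for each $\ell$-leaf~$w$ of~$\A$, the projection $`p_{\overline w}(\tradt[\D_1]{\t})$ is typable in~\F\ with type $\trad{\forall\vec X.(\U\to T_w)}=\forall\vec X.(\trad\U\tof\trad{T_w})$ — this follows from the induction hypothesis ($\tradt[\D_1]{\t}:\trad{\A[w\mapsto\cdots]}$) together with a projection lemma stating that projecting along the mirror path~$\overline w$ of a product-structured type extracts the corresponding component. Symmetrically, $`p_{\overline v}(\tradt[\D_2]{\u})$ has type $\trad{\U\tsubs[\vec X]{\vec V_v}}=\trad\U\tsubs[\vec X]{\trad{\vec V_v}}$.

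Having established these component typings, for each pair of leaves $(w,v)$ I would instantiate the polymorphic function type by $\vec X\mapsto\trad{\vec V_v}$ (via repeated $`A$-elimination in~\F) and apply it to the argument, obtaining $`p_{\overline w}(\tradt[\D_1]{\t})\,`p_{\overline v}(\tradt[\D_2]{\u}):\trad{T_w}\tsubs[\vec X]{\trad{\vec V_v}}=\trad{T_w\tsubs[\vec X]{\vec V_v}}$, using again the substitution-commutes-with-translation lemma. Finally I would reassemble these components into the pair-tree $\A`o\A'[wv\mapsto\cdots]$ by structural induction on the composite tree, at each binary node applying the \F\ pairing rule; since the type translation sends a sum-tree to the corresponding product-tree, the resulting \F\ type is exactly $\trad{\A`o\A'[wv\mapsto T_w\tsubs[\vec X]{\vec V_v}]}$. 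The crux, and where care is needed, is keeping the tree-indexed bookkeeping coherent: matching the mirror-path projections $`p_{\overline w}$ against the product structure of the translated type and verifying that the pairing reconstruction follows the same tree shape that the type translation produces.
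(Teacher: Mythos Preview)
Your proposal is correct and follows essentially the same approach as the paper: induction on~$\D$, with the routine cases handled exactly as you describe, and the $\estruct$ case treated by (i) a lemma that the type translation preserves tree structure (the paper's Lemma~\ref{lem:tree-trad}), (ii) a projection/reassembly lemma for tree-shaped products in \F\ (the paper's Lemma~\ref{lem:tree-type}, whose two parts are precisely your ``projection lemma'' and your ``reassemble by structural induction on the composite tree''), and (iii) the observation that $\trad{\U\tsubs{V}}=\trad\U\tsubs{\trad V}$. The only cosmetic difference is that the paper isolates these three facts as named auxiliary lemmas rather than folding them into the main induction.
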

The technical details for its proof are given in Appendix~\ref{ap:cor-trad}. In Appendix~\ref{ap:reverse} it is given a theorem showing that the translation is not trivial since it is reversible.

To return back to~\add, observe that if $\T\equiv\T'$, their translations are equivalent in \F\ (in the sense that there exists two terms establishing an isomorphism between them), and conclude with Proposition~\ref{prop:equiv-addstruct}.
\begin{theorem}
  \label{thm:type-trans}
  If a judgement~$`G\vdash\t:T$ is derivable in \add, then there is a   term~$t'$ of \F\ such that
$\trad{`G}\thesi t':\trad{T}$
\end{theorem}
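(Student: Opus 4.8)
The plan is to route the judgement through \sadd\ and then repair the discrepancy between the two equivalent types. By Proposition~\ref{prop:equiv-addstruct}, the derivation of $`G\vdash\t:T$ in \add\ yields a type $T'\equiv T$ together with a derivation $\mathcal D$ of $`G\vdash\t:T'$ in \sadd. Applying Theorem~\ref{theo:sadd-F} to $\mathcal D$ gives $\trad{`G}\thesi\tradt{\t}:\trad{T'}$. It then remains only to move from the type $\trad{T'}$ to $\trad{T}$, which I would do by post-composing $\tradt{\t}$ with a coercion term of \F\ witnessing that $\trad{T'}$ and $\trad{T}$ are isomorphic.

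The technical core is thus the following coercion lemma: \emph{if $T\equiv T'$ then there are closed \F-terms $c:\trad{T'}\tof\trad{T}$ and $\bar c:\trad{T}\tof\trad{T'}$ that are mutually inverse up to reduction.} I would prove it by induction on the derivation of $T\equiv T'$, viewing $\equiv$ as the least congruence containing the three type axioms. Under $\trad{\cdot}$ the generating equations become the standard isomorphisms on products and the unit: commutativity $\T+\R\equiv\R+\T$ is witnessed by the swap $`lz.\pair{\pir z}{\pil z}$ between $\trad{\T}`*\trad{\R}$ and $\trad{\R}`*\trad{\T}$ (its own inverse); associativity $\T+(\R+\S)\equiv(\T+\R)+\S$ by the reassociation $`lz.\pair{\pair{\pil z}{\pil{\pir z}}}{\pir{\pir z}}$; and neutrality $\T+\tzero\equiv\T$ by $`lz.\pil z:\trad{\T}`*\uno\tof\trad{\T}$ with inverse $`lz.\pair z\star$, using $\trad{\tzero}=\uno$. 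Reflexivity gives the identity $`lz.z$, symmetry exchanges $c$ and $\bar c$, and transitivity composes coercions.

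The congruence closure is where both directions are genuinely needed, and this is the main, although routine, obstacle, because arrow domains occur contravariantly. Given by induction the coercions for $\U\equiv\U'$ and for $\T\equiv\T'$, a coercion $\trad{\U'\to\T'}\tof\trad{\U\to\T}$ is $`lf.`lx.\,c_\T\,(f\,(\bar c_\U\,x))$, where the codomain coercion $c_\T$ is used directly while the domain coercion $\bar c_\U$ must be taken in the reverse direction; this contravariant use is exactly why the lemma has to carry both directions at once, yielding the isomorphism announced just before the theorem. For products I would transport componentwise by $`lz.\pair{c_\T\,(\pil z)}{c_\R\,(\pir z)}$, with the components suitably oriented. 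For the quantifier, since \F\ is à la Curry, with no term-level type abstraction or application, as confirmed by the clauses $\tradt{\t}=\tradt[\mathcal D']{\t}$ for $`A_I$ and $`A_E$, the coercion for $`AX.\U\equiv`AX.\U'$ is simply $`lf.\,c_\U\,f$, where $`A_E$ instantiates the bound variable of $f$, the component coercion $c_\U$ acts on the resulting body, and $`A_I$ re-generalises, which is legitimate because $X$ is not free in the context. Each case is a direct verification against the typing rules of \F\ (Fig.~\ref{fig:sysF}).

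Finally, instantiating the lemma at $T'\equiv T$ supplies $c$ with $\thesi c:\trad{T'}\tof\trad{T}$. Setting $t'=c\,\tradt{\t}$ and applying the arrow-elimination rule of \F\ to $\trad{`G}\thesi\tradt{\t}:\trad{T'}$ gives $\trad{`G}\thesi t':\trad{T}$, which is the required conclusion.
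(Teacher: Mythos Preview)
Your proposal is correct and follows essentially the same route as the paper: pass through \sadd\ via Proposition~\ref{prop:equiv-addstruct}, apply Theorem~\ref{theo:sadd-F}, and then bridge the gap between $\trad{T'}$ and $\trad T$ using the fact that type equivalence in \add\ translates to type isomorphism in \F. The paper merely states this last step (``observe that if $T\equiv T'$, their translations are equivalent in \F\ (in the sense that there exist two terms establishing an isomorphism between them)''), whereas you spell out the inductive construction of the coercions, including the contravariant treatment of arrow domains and the Curry-style handling of the quantifier; this is a welcome elaboration rather than a different argument.
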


To some extent, the translation from \sadd\ to \F\ is also correct with respect to reduction (technical details for its proof in Appendix~\ref{ap:cor-red}).
\begin{theorem}[Correction with respect to reduction]
  \label{theo:cor-red}
  Let $`G\vdash \t:T$ be derivable (by~\D) in \sadd, and $\t\to\u$.
  If the reduction is not due to rule $\t + \zero \to \t$,
  then there is \D' deriving $`G\vdash \u:T$, and
  \begin{math}
    \tradt{\t} \to^+\tradt[\D']{\u}.
  \end{math}
\end{theorem}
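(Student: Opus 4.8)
The plan is to prove the statement by induction on the typing derivation~$\D$ of $`G\vdash\t:T$, splitting into cases according to the last rule of~$\D$ (which fixes the shape of~$\t$) and, when the contracted redex sits at the root of~$\t$, according to which of the five non-excluded rules fires. In every case I must do two things at once: exhibit the derivation $\D'$ of $`G\vdash\u:T$, so that the argument simultaneously establishes subject reduction for \sadd\ (minus the excluded rule), and verify the strict reduction $\tradt{\t}\to^+\tradt[\D']{\u}$. The only non-bookkeeping ingredient is a \emph{substitution lemma for the translation}: if $\D_0$ derives $`G,x:\U\vdash\t:T$ and $\D_v$ derives $`G\vdash\v:\U$, then there is a derivation $\D'$ of $`G\vdash\t\subs{\v}:T$ with $\tradt[\D_0]{\t}\{\tradt[\D_v]{\v}/x\}=\tradt[\D']{\t\subs{\v}}$. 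I would prove this first, by induction on~$\D_0$.

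For the root cases, the $`b$-rule is where the substitution lemma pays off. A redex $(`lx.\t)\v$ is typed by~$\estruct$ from a function of arrow type (a one-leaf tree, once the term-invisible $\forall_I/\forall_E$ steps are peeled off) and a value~$\v$ of unit type (also a one-leaf tree); hence $\A\circ\A'$ is a single leaf and $\tradt{(`lx.\t)\v}=(`lx.\tradt{\t})\,\tradt{\v}$, a genuine $`b$-redex of~\F, which contracts in one step to $\tradt{\t}\{\tradt{\v}/x\}$, identified by the substitution lemma with $\tradt[\D']{\t\subs{\v}}$. For the distributivity rule $(\t+\u)\s\to\t\s+\u\s$, the derivation ends with~$\estruct$ whose left premise ends with~$+_I$, so $\tradt{\t+\u}=\pair{\tradt{\t}}{\tradt{\u}}$ and, at every leaf $\mathtt{l}w'$ of the function tree $\As(\A_\t,\A_\u)$, the label contains the projection $\pil{\pair{\tradt{\t}}{\tradt{\u}}}$ applied directly to the pair, i.e.\ a $\pi$-redex. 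Contracting all these redexes and using the tree identity $\As(\A_\t,\A_\u)\circ\A_\s=\As(\A_\t\circ\A_\s,\A_\u\circ\A_\s)$ rewrites $\tradt{(\t+\u)\s}$ into $\pair{\tradt{\t\s}}{\tradt{\u\s}}=\tradt[\D']{\t\s+\u\s}$, where~$\D'$ applies~$\estruct$ to each summand and closes with~$+_I$; the left-distributivity rule is symmetric, with projections on the argument side. The two zero-rules $\zero\t\to\zero$ and $\t\zero\to\zero$ are handled by direct computation: the function (resp.\ argument) tree collapses to~$\Az$, both sides of the translation reduce to the $\uno$-term~$\star$, and~$\D'$ is the axiom~$ax_\tzero$.

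The contextual cases follow from the induction hypothesis together with the closure of \F-reduction under term constructors. When the redex lies in a strict subterm, the hypothesis supplies a $\to^+$ reduction on that subterm's translation and a derivation of the same type; since the translation of the enclosing term embeds the subterm's translation (inside an application under some projections, inside a pair, or under a~$`l$), and each such constructor is a reduction context in~\F, the reduction propagates and yields $\tradt{\t}\to^+\tradt[\D']{\u}$. Concretely one uses that the rewriting of~\F\ is closed under the contexts $[\,]\,u$, $t\,[\,]$, $\pair{[\,]}{u}$, $\pair{t}{[\,]}$, $\pil{[\,]}$, $\pir{[\,]}$ and~$`lx.[\,]$.

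The main obstacle is the substitution lemma, whose proof must track how the translation interacts with substitution through the~$\estruct$ case: substituting~$\v$ for~$x$ inside an application $\t_1\t_2$ has to commute with the projections~$`p_{\overline w}$, with~$`p_{\overline v}$, and with the tree composition, and this is exactly where the requirement that~$\v$ be a value is essential, since then $\tradt{\v}$ is itself an \F-value of one-leaf unit type and is never a pair to be projected. A secondary difficulty is the combinatorial check in the distributivity case that the projections arising after composition are precisely the $\pi$-redexes required and that the residual tree matches $\A_\t\circ\A_\s$; the invisibility of~$\forall_I$ and~$\forall_E$ to the term translation is used throughout to reduce every arrow type to its propositional skeleton. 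Finally, this analysis makes transparent why the rule $\t+\zero\to\t$ must be excluded: its two sides translate to $\pair{\tradt{\t}}{\star}$ and $\tradt{\t}$, and \F\ has no reduction extracting a single component of a pair, so neither $\to^+$ nor even $\to^*$ could hold.
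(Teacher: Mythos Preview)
Your plan matches the paper's: the proof proceeds by induction on the typing derivation~$\D$, with the substitution lemma for the translation (the paper's Lemma~\ref{lem:substitutionoftranslation}) as the one non-routine ingredient, used exactly in the $\beta$-case. The paper's own proof is just the one-line sketch ``long but straightforward using the previous lemmas; it follows by induction over~$\D$'', so your outline already supplies more detail than the paper provides.
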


Notice that the associativity and commutativity of types have their analogous in the term equivalences.
However, the equivalence $T+0\equiv T$ has its analogous with a
reduction rule, $\t+\ve 0\to\t$. Since \sadd\ has no
equivalences, this reduction rule is not correct in the translation.
However, if $`G\vdash\t + \ve 0: T+\tzero$~is derivable by~\D\ in \sadd,
then there is some~$\D'= `G\vdash\t:T$ such that
$\varepsilon_{\trad{T+\tzero},\trad{T}} \tradt{\ve{t+0}}\to^*\tradt[\D']{\t}$, where $\varepsilon_{\trad{T+\tzero},\trad{T}}$ and
$\varepsilon_{\trad{T},\trad{T+\tzero}}$ are the terms establishing the isomorphism between
$\trad{T}$ and $\trad{T+\tzero}$ in \F.

\section*{Conclusion}
\label{sec:concl}
In this paper we considered an extension to call-by-value lambda calculus with a non-deterministic (or algebraic) operator $+$, and we mimiced its behaviour at the level of types. As we discussed in the introduction, this operator behaves like the algebraic sum with linear functions: $\mathbf f(x+y) = \mathbf f(x)+\mathbf f(y)$. However, our system is simulated by System~$F$ with pairs, which corresponds to the {\em non linear} fragment of \textsc{imell}.

This puts in the foreground the deep difference between the linearity in the algebraic sense (the one of Linear Logic), and the linearity of \add\ (which is the same, for instance, as Lineal~\cite{ArrighiDowekRTA08}). In the first case, a function is linear if it does not \textit{duplicate} its argument~$x$ (that is, $x^2$ --or $xx$-- will not appear during the computation), whereas in \add\ a linear behaviour is achieved by banning sum terms substitutions: while computing $(`lx.\t)(\u+\s)$, the argument $(\u+\s)$ will never be duplicated even if~\t\ is not linear in~$x$. We can only duplicate values (that intuitively correspond to constants in the algebraic setting, so their duplication does not break linearity).
Actually, in \add, the application is always distributed over the sum before performing the $`b$-reduction, and these both reductions do not interact. This is what our translation shows: all distributivity rules are simulated \textit{during} the translation (of the application), and then the $`b$-reduction is simulated in System~$F$, without paying any attention to the linearity.

As mentioned in the introduction, Lineal was meant for quantum computing and forcing the left distributivity is useful to prevent cloning. Moreover, it makes perfectly sense to consider any function as linear in this setting, since every quantum operator is given by a matrix, and thereby is linear. A \cbv\ reduction for this kind of calculus is thus entirely appropriate.


\paragraph{Acknowledgements.}
We would like to thank Olivier Laurent for the useful advice he gave us about the interpretation we present in this paper, as well as Pablo Arrighi for the fruitful discussions about Lineal and its linearity.

\bibliographystyle{splncs}
\bibliography{biblio}

\appendix
\section{Formalisation of the Proof of Subject Reduction}
\label{sec:pr-sr}
The preservation of types by reduction, or \textit{subject reduction} property, is proved by adapting the proof of Barendregt~\cite[Section~4.2]{Barendregt92} for the Sytem~$F$:
we first define a binary relation~$\preccurlyeq$ on types, and then we give the usual generation and substitution lemmas.
Finally, we give a needed property (Lemma~\ref{lem:type0val}) for the typing of $\ve 0$ and values.

\begin{definition}[Relation~$\preccurlyeq$ on types]\label{def:order}
  \begin{itemize}
  \item Given two types~$\U_1$ and $\U_2$, we write $\U_1\ssub \U_2$         if either
  \begin{itemize}
  \item $\U_2\equiv\forall X.\U_1$ or
  \item $\U_1\equiv\forall X.\U'$ and $\U_2\equiv \U'\tsubs{\T}$ for     some type~$T$.
  \end{itemize}
  \item We write $\preccurlyeq$ the reflexive (with respect to $\equiv$) transitive closure of $\ssub$.
  \end{itemize}
\end{definition}

\noindent
The following property says that if two arrow types are related by $\preccurlyeq$, then they are equivalent up to substitutions.

\begin{lemma}[Arrow comparison]\label{lem:arrowscomp}
  For any unit types $\U$, $\U'$ and types $\T$, $\T'$,
  if $\U'\to\T'\preccurlyeq \U\to\T$, then there exist $\vec\V,\vec X$ such that
  $\U\to \T\equiv(\U'\to \T)\tsubs[\vec X]{\vec\V}$.
\end{lemma}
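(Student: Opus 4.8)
The plan is to make the relation $\preccurlyeq$ explicit and to proceed by induction on the length of a witnessing chain. By Definition~\ref{def:order}, $\U'\to\T'\preccurlyeq\U\to\T$ means that there are types $\W_0\equiv\U'\to\T',\dots,\W_n\equiv\U\to\T$ with $\W_k\ssub\W_{k+1}$ for each $k$, where every elementary step is either a \emph{generalisation} $\W\ssub\forall X.\W$ or an \emph{instantiation} $\forall X.\W\ssub\W\tsubs[X]{\V}$. Since a generalisation only adds an outermost quantifier and an instantiation only removes one, I would take as strengthened induction hypothesis that each $\W_k$ is equivalent to a \emph{quantified arrow} $\forall\vec Z.\big(\U'\tsubs[\vec X]{\vec\V}\to\T'\tsubs[\vec X]{\vec\V}\big)$: an arrow whose domain and codomain are obtained from $\U'$ and $\T'$ by instantiating variables that had previously been bound, possibly sitting under a residual prefix $\forall\vec Z$ of quantifiers not yet eliminated.

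First I would verify the two inductive steps against this invariant. A generalisation step merely pushes one more variable onto the residual prefix $\vec Z$; it touches neither $\U'$ nor $\T'$, only recording a new binder. An instantiation step strips the outermost binder of the prefix and extends the accumulated substitution $\tsubs[\vec X]{\vec\V}$ by one more pair; here the congruence of $\equiv$ together with the capture-avoiding convention (free and bound variables of a type kept distinct) guarantees that the substitution commutes with the arrow constructor, so that $(\U'\to\T')\tsubs[\vec X]{\vec\V}\equiv\U'\tsubs[\vec X]{\vec\V}\to\T'\tsubs[\vec X]{\vec\V}$ and the invariant is preserved. The base case is immediate, with empty prefix $\vec Z$ and empty substitution.

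At the end of the chain $\W_n\equiv\U\to\T$ is a bare arrow, hence its residual prefix is empty: every quantifier introduced along the way has eventually been eliminated. Reading off the invariant then yields $\vec X,\vec\V$ with $\U\equiv\U'\tsubs[\vec X]{\vec\V}$ and $\T\equiv\T'\tsubs[\vec X]{\vec\V}$. The last point to settle is that this substitution can be taken to \emph{fix the codomain}~$\T$, which is exactly what converts a substitution instance of $\U'\to\T'$ into the stated substitution instance of $\U'\to\T$. Each $X_i$ in $\vec X$ is a variable bound by a generalisation step before being instantiated, so by the convention that bound variables are chosen distinct from the free ones we may assume $X_i\notin FV(\T)$. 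Consequently $\T\tsubs[\vec X]{\vec\V}\equiv\T$, and therefore $\U\to\T\equiv\U'\tsubs[\vec X]{\vec\V}\to\T\tsubs[\vec X]{\vec\V}=(\U'\to\T)\tsubs[\vec X]{\vec\V}$, which is the required conclusion.

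I expect the crux to be precisely this codomain-fixing step, together with the bookkeeping that supports it: formulating the quantified-arrow invariant so that it survives the arbitrary interleaving of generalisations and instantiations, and tracking that the variables of $\vec X$ are exactly the binders introduced along the chain, so that the concluding freshness argument $\T\tsubs[\vec X]{\vec\V}\equiv\T$ is legitimate. This freshness management is what separates the stated result, a substitution instance of $\U'\to\T$, from the weaker observation that $\U\to\T$ is merely a substitution instance of $\U'\to\T'$.
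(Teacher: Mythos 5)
Your chain-invariant argument is sound, and indeed it is the natural route (the paper only states this lemma in Appendix~\ref{sec:pr-sr} without proof): every type along a $\ssub$-chain starting from an arrow is, up to $\equiv$, of the form $\forall\vec Z.\big((\U'\to\T')\tsubs[\vec X]{\vec\V}\big)$, and since $\equiv$ only rearranges sums it cannot relate a quantified type to a bare arrow, so the residual prefix is empty at the end of the chain. This correctly yields $\U\equiv\U'\tsubs[\vec X]{\vec\V}$ and $\T\equiv\T'\tsubs[\vec X]{\vec\V}$. The genuine gap is exactly the ``codomain-fixing'' step you flag as the crux. The claim that each $X_i\notin FV(\T)$ holds ``by the bound-variable convention'' is false: a generalisation step $\W\ssub\forall X.\W$ typically binds a variable occurring \emph{free} in $\U'\to\T'$ (that is the whole point of $\forall$-introduction), and the substituents $\V_j$ are arbitrary types that may mention those same variables, so they can reappear free in $\T=\T'\tsubs[\vec X]{\vec\V}$; $\alpha$-renaming can move a binder but cannot rename the free occurrences in $\T'$ or in the $\V_j$. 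Concretely, take the chain $X\to X\ \ssub\ \forall X.(X\to X)\ \ssub\ (X\to X)\to(X\to X)$, the standard self-instantiation of the identity's type with $\V=X\to X$. Here $\U'=\T'=X$ and $\U=\T=X\to X$, and \emph{no} substitution $\sigma$ satisfies $(\U'\to\T)\sigma\equiv\U\to\T$: since any composed substitution is homomorphic on arrows, $(\U'\to\T)\sigma=X\sigma\to(X\sigma\to X\sigma)$, and matching it against $(X\to X)\to(X\to X)$ forces both $X\sigma\equiv X\to X$ and $X\sigma\equiv X$, which is impossible because $\equiv$ preserves the arrow skeleton. So the step does not merely lack justification; it cannot be repaired.

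What this reveals is that the conclusion as printed, $\U\to\T\equiv(\U'\to\T)\tsubs[\vec X]{\vec\V}$, is evidently a typo for $\U\to\T\equiv(\U'\to\T')\tsubs[\vec X]{\vec\V}$: the surrounding prose announces that the two arrows are ``equivalent up to substitutions'', and the intended use in the subject-reduction proof (retyping $\t\subs{\v}$ after a $\beta$-step, via Lemma~\ref{lem:substitution}) needs precisely $\T\equiv\T'\tsubs[\vec X]{\vec\V}$, information the printed form does not even supply, as $\T'$ never appears in its conclusion. Your invariant argument proves exactly this corrected statement, with both components $\U\equiv\U'\tsubs[\vec X]{\vec\V}$ and $\T\equiv\T'\tsubs[\vec X]{\vec\V}$; you should stop there and delete the freshness paragraph, rather than bend a correct proof to fit an erroneous conclusion.
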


As a pruned version of a subtyping system, we can prove the subsumption rule:
\begin{lemma}[$\preccurlyeq$-subsumption]
  \label{lem:ordertyping}
  For any context $\Gamma$, any term $\t$ and any unit types $\U$, $\U'$ such that $\U\preccurlyeq\U'$ and no free type variable in~$\U$ occurs in~$\Gamma$, if $\Gamma\vdash\t\type\U$ then $\Gamma\vdash\t\type\U'$.
\end{lemma}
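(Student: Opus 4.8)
The plan is to prove the statement by induction on the length of a $\ssub$-chain witnessing $\U\preccurlyeq\U'$, exploiting the fact that the two clauses defining $\ssub$ mirror exactly the two quantifier rules of the system. A \emph{generalisation} step $\U_i\ssub\U_{i+1}$ with $\U_{i+1}\equiv\forall X.\U_i$ corresponds to an application of $\forall_I$, and an \emph{instantiation} step $\U_i\ssub\U_{i+1}$ with $\U_i\equiv\forall X.\W$ and $\U_{i+1}\equiv\W\tsubs{\V}$ corresponds to an application of $\forall_E$. The base case of the induction is $\U\equiv\U'$, which is discharged immediately by rule~$\equiv$.

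For the inductive step I would peel off the first step of the chain, writing $\U\ssub\U_1\preccurlyeq\U'$, and argue as follows. If the step is an instantiation, then from $\Gamma\vdash\t:\U\equiv\forall X.\W$ I obtain $\Gamma\vdash\t:\W\tsubs{\V}\equiv\U_1$ using $\equiv$ and $\forall_E$ (which carries no side condition). If the step is a generalisation, $\U_1\equiv\forall X.\U$, then I apply $\forall_I$ to the hypothesis $\Gamma\vdash\t:\U$; its side condition $X\notin FV(\Gamma)$ holds because no free type variable of~$\U$ occurs in~$\Gamma$, and if $X$ does not actually occur free in~$\U$ it may be chosen fresh by $\alpha$-conversion. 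In the generalisation case the freshness hypothesis needed to apply the induction hypothesis to the shorter chain $\U_1\preccurlyeq\U'$ is automatically inherited, since $FV(\U_1)=FV(\U)\setminus\{X\}\subseteq FV(\U)$ is disjoint from $FV(\Gamma)$; one then concludes $\Gamma\vdash\t:\U'$.

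The main obstacle is precisely the instantiation case: applying the induction hypothesis to $\U_1\preccurlyeq\U'$ requires that no free type variable of~$\U_1$ occur in~$\Gamma$, and this is \emph{not} inherited, because the witness~$\V$ may introduce into~$\U_1$ variables that occur free in~$\Gamma$, which would block a later $\forall_I$ step. The observation that repairs this is that both $\preccurlyeq$ and the statement to be proved are stable under renaming of variables that are fresh for $\U$ and~$\Gamma$. Concretely, before running the induction I would $\alpha$-rename the chain so that every variable which is simultaneously introduced by an instantiation witness and generalised further down the chain is replaced by a variable absent from both $\Gamma$ and~$\U$. After this preprocessing, every generalisation step generalises a variable outside $FV(\Gamma)$ (it is either a variable of~$\U$, hence fresh for~$\Gamma$ by hypothesis, or one of the freshly chosen ones), so $\forall_I$ always applies; meanwhile any context variable introduced by an instantiation but never generalised simply remains free in the final type, which is harmless since the side condition constrains only the source type~$\U$, not~$\U'$. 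The renamed chain then reaches a type that is $\alpha$-equal, hence $\equiv$-equal, to~$\U'$, and a final application of rule~$\equiv$ closes the argument.
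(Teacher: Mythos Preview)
The paper does not actually prove this lemma: Appendix~\ref{sec:pr-sr} only remarks that ``as a pruned version of a subtyping system, we can prove the subsumption rule'' and then states it without argument. So there is nothing to compare your proposal against; what matters is whether your argument stands on its own.

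It does. Inducting on the length of a $\ssub$-chain and matching the two clauses of Definition~\ref{def:order} with rules $\forall_E$ and $\forall_I$ is the natural route, and you have correctly isolated the one genuine obstacle: an instantiation witness may import a free variable of $\Gamma$ that is later generalised, blocking $\forall_I$. Your repair---globally $\alpha$-renaming those ``locally bound'' variables of the chain to fresh names before running the induction---is sound, because any such variable ends up bound in $\U'$ and hence the renamed chain still reaches an $\alpha$-equal (so $\equiv$-equal) endpoint, while a variable that is introduced but never generalised causes no trouble since the freshness hypothesis constrains only $\U$, not $\U'$. This is exactly the Barendregt-convention manoeuvre one expects here.

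Two small remarks. First, a slightly tidier packaging is to peel off the \emph{last} step of the chain rather than the first: then the induction hypothesis (whose freshness premise speaks only of the fixed starting type $\U$) applies immediately, and the renaming is needed just once, for the final $\forall_I$; the content is the same. Second, note a mismatch between Definition~\ref{def:order} and rule $\forall_E$: the instantiation clause of $\ssub$ is written with an arbitrary \emph{type} $\T$, whereas $\forall_E$ only substitutes a \emph{unit} type $\V$. Your write-up silently uses a unit witness $\V$, which is almost certainly the intended reading (the relation is defined on unit types and is meant to mirror the quantifier rules), but strictly speaking the paper's definition would allow, e.g., $\forall X.(Y\to X)\ssub Y\to(\R+\S)$, which is \emph{not} derivable by $\forall_E$. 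This is a wrinkle in the paper's formulation rather than in your proof.
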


Generation lemmas allows to study the conclusion of a derivation
so as to understand where it may come from, thereby decomposing the term in its
basic constituents.

\begin{lemma}[Generation lemmas]
  \label{lem:gen}
  For any context~$`G$, any terms~\t,\u, and any type~\T,
  \begin{enumerate}
  \item\label{it:genapp}
    $`G\vdash\t\u\type\T$ implies
    $`G\vdash\t\type\sui{n}\forall\vec X.(\U\to \T_i)$ and
    $`G\vdash\u\type\suj{m} \U\tsubs[\vec X]{\vec \V_j}$
    for some integers~$n$, $m$, some types~$\T_1,\dots,\T_n$, and some unit types~$\U,\vec \V_1,\dots,\vec \V_m$ such that
    $\sui{n}\suj{m}\T_i\tsubs[\vec X]{\vec\V_j}\preccurlyeq\T$.
  \item\label{it:genabs}
    $`G\vdash`lx.\t\type\T$ implies
    $`G,x\type\U\vdash\t\type\R$ for some types \U,\R\ such that $\U\to\R\preccurlyeq\T$.
  \item\label{it:gensum}
    $`G\vdash\t+\u\type\T$ implies
    $`G\vdash\t\type \R$ and
    $`G\vdash\u\type \S$ with for some types~\R, \S\ such that
    $\R+\S\equiv\T$.
  \end{enumerate}
\end{lemma}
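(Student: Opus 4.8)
The plan is to establish all three parts by induction on the structure of the typing derivation, in each case splitting on the last rule applied. The key structural fact is that, among the rules of \add, only $\to_E$, $\to_I$ and $+_I$ actually build a compound term; the rules $\equiv$, $\forall_I$ and $\forall_E$ that can still conclude with a term of the relevant shape leave the subject untouched and merely rewrite its type. Consequently each part has a single \emph{principal} case together with three \emph{transparent} cases, and because the transparent rules do not change the term, the three statements can be handled by three independent inductions (the induction hypothesis for part~(\ref{it:genapp}) is only ever invoked on a shorter derivation of a judgement $`G\vdash\t\u\type\T'$, and similarly for the others).

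The principal cases are immediate. If $`G\vdash\t\u\type\T$ ends with $\to_E$, then $\T=\sui{n}\suj{m}\T_i\tsubs[\vec X]{\vec\V_j}$ and the premises are exactly those required in~(\ref{it:genapp}), so $\sui{n}\suj{m}\T_i\tsubs[\vec X]{\vec\V_j}\preccurlyeq\T$ holds by reflexivity; an abstraction ending with $\to_I$ yields~(\ref{it:genabs}) with $\U\to\R\equiv\T$, and a sum ending with $+_I$ yields~(\ref{it:gensum}) with $\R+\S\equiv\T$.

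For parts~(\ref{it:genapp}) and~(\ref{it:genabs}) I would dispatch the transparent cases by applying the induction hypothesis to the immediate sub-derivation and then folding the type change into the $\preccurlyeq$-bound. An $\equiv$ step composes with the reflexive-up-to-$\equiv$ part of $\preccurlyeq$; a $\forall_I$ step, passing from a unit premise type $\U_0$ to $\forall X.\U_0$, is precisely one $\ssub$-step (first clause of Definition~\ref{def:order}); and a $\forall_E$ step, passing from $\forall X.\U_0$ to $\U_0\tsubs{\V}$, is one $\ssub$-step of the second clause. In all three situations the intermediate types are genuine unit types (a prerequisite for $\forall_I/\forall_E$ to fire), so transitivity of $\preccurlyeq$ closes the gap.

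The delicate point is part~(\ref{it:gensum}) under $\forall_I$ or $\forall_E$, since those rules can conclude a sum $\t+\u$ only when the sum carries a \emph{unit} type, say $\forall X.\U_0$. Here the induction hypothesis gives summand types $\R,\S$ with $\R+\S\equiv\forall X.\U_0$. Because every type is equivalent to a sum of unit types (Remark~\ref{rem:unitsaddition}) and the number of unit atoms is an $\equiv$-invariant, a unit type is atomic for $+$: one of $\R,\S$ must be $\equiv\tzero$ and the other $\equiv\forall X.\U_0$. I would then retype the genuine summand through the very same $\forall_I$ (resp.\ $\forall_E$) step, retype the vanishing summand as $\tzero$ via $\equiv$, and reabsorb it with $\T+\tzero\equiv\T$, recovering a decomposition $\R'+\S'\equiv\T$. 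This degenerate interaction---and the atomicity argument that tames it---is the only genuinely subtle ingredient; the rest is routine $\preccurlyeq$-bookkeeping that does not even require the earlier arrow-comparison or subsumption lemmas.
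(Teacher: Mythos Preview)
Your proposal is correct and follows exactly the standard route the paper has in mind: the paper does not spell out a proof of the generation lemmas, treating them as routine adaptations of Barendregt's argument for System~$F$, and your induction on the derivation with the principal/transparent case split is precisely that adaptation. One cosmetic slip: in part~(\ref{it:gensum}) under $\forall_I$ the premise carries the unit type $\U_0$ rather than $\forall X.\U_0$, so the induction hypothesis yields $\R+\S\equiv\U_0$ (your atomicity argument then applies verbatim and you reintroduce the quantifier afterwards); the $\forall X.\U_0$ version you wrote is the $\forall_E$ case.
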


\noindent
The following lemma is standard in proofs of subject reduction, and can be found for example in~\cite[Prop.~4.1.19]{Barendregt92} and \cite[Props.~8.2 and 8.5]{Krivine90}. It ensures than by substituting type variables for types or term variables for terms in an adequate manner, the type derived is still valid.

\begin{lemma}[Substitution]
  \label{lem:substitution}
  For any $`G$, $\T$, $\U$, $\v$ and $\t$,
  \begin{enumerate}
  \item $`G\vdash\t\type\T$ implies
    $`G\tsubs{\U}\vdash\t\type\T\tsubs{\U}$.
  \item If\ $`G,x\type\U\vdash\t\type\T$, and\
    $`G\vdash\v\type \U$, then\
    $\Gamma\vdash\t\subs{\v}\type\T$.
  \end{enumerate}
\end{lemma}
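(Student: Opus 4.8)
The plan is to prove both statements independently, each by induction on the derivation of the typing judgement appearing as hypothesis. Two routine preliminaries will be used: that type equivalence~$\equiv$ is stable under type substitution (immediate, since $\equiv$ is defined as a congruence), and a standard weakening property for part~2 (enlarging a context with a fresh variable preserves derivability), which follows by an easy induction given that the axiom rules admit an arbitrary $\Gamma$. I will also rely on the commutation of type substitutions, $\S\tsubs[Y]{\V}\tsubs{\U}\equiv\S\tsubs{\U}\tsubs[Y]{\V\tsubs{\U}}$, valid once the bound type variables are renamed apart from~$X$ and from the free variables of~$\U$, as permitted by the convention that free and bound variables are kept distinct.

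For part~1, I induct on the derivation of $\Gamma\vdash\t\type\T$ and show that applying $\tsubs{\U}$ uniformly to the context and the conclusion preserves derivability. The axiom cases ($ax$, $ax_\tzero$) are immediate, and the $\equiv$ case uses stability of~$\equiv$ under substitution. For $\to_I$ and $+_I$ the substitution commutes with the type constructors, so one applies the induction hypothesis to the premises and reapplies the rule, using $(\U'\to\T)\tsubs{\U}=\U'\tsubs{\U}\to\T\tsubs{\U}$ and similarly for~$+$. The binder cases are the delicate ones. In $\forall_I$ I rename the quantified variable~$Y$ so that $Y\neq X$ and $Y\notin FV(\U)$; then the side condition $Y\notin FV(\Gamma\tsubs{\U})$ still holds and the substitution may be pushed under the quantifier. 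In $\forall_E$ I apply $\forall_E$ with the substituted witness $\V\tsubs{\U}$ and invoke the commutation fact to identify $(\U'\tsubs[Y]{\V})\tsubs{\U}$ with $(\U'\tsubs{\U})\tsubs[Y]{\V\tsubs{\U}}$. For the generalised application rule $\to_E$, the same commutation (after renaming the vector $\vec X$ apart from~$X$ and $FV(\U)$) shows that $\tsubs{\U}$ commutes with the vectorial substitution $\tsubs[\vec X]{\vec\V_j}$, so applying the induction hypothesis to both premises and reapplying $\to_E$ yields the claim.

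For part~2, I induct on the derivation of $\Gamma,x\type\U\vdash\t\type\T$, keeping the hypothesis $\Gamma\vdash\v\type\U$ available and transporting it across binders by weakening. The key case is $ax$: if the derived variable is~$x$ then $\T$ is~$\U$ and $\t\subs{\v}=\v$, so the conclusion is exactly the second hypothesis; if it is another variable~$y$ then $\t\subs{\v}=y$ and the judgement is inherited directly from~$\Gamma$. In $\to_I$ I $\alpha$-rename the bound variable~$y$ so that $y\neq x$ and $y\notin\fv{\v}$, weaken to $\Gamma,y\type\U'\vdash\v\type\U$, apply the induction hypothesis to obtain $\Gamma,y\type\U'\vdash\t\subs{\v}\type\T$, and conclude with $(\lambda y.\t)\subs{\v}=\lambda y.(\t\subs{\v})$. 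The remaining cases ($ax_\tzero$, $+_I$, $\to_E$, $\forall_I$, $\forall_E$, $\equiv$) leave~$x$ untouched: the substitution distributes over the term constructor, so one applies the induction hypothesis to the premises and reapplies the rule, the $\forall_I$ side condition being unaffected since $\t\subs{\v}$ introduces no new free type variables.

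The main obstacle is not conceptual but lies in the bookkeeping of bound variables. In part~1 one must justify pushing $\tsubs{\U}$ under~$\forall$ and under the vectorial substitution of $\to_E$ by $\alpha$-renaming the bound type variables apart from~$X$ and from $FV(\U)$, and similarly justify the commutation of substitutions; in part~2 the capture-avoiding $\alpha$-renaming of the bound term variable in the $\to_I$ case is what prevents the free variables of~$\v$ from being captured. All other cases are direct applications of the induction hypothesis followed by the same rule.
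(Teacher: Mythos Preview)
Your proposal is correct and follows exactly the standard approach the paper has in mind: the paper does not spell out a proof but simply points to Barendregt~\cite[Prop.~4.1.19]{Barendregt92} and Krivine~\cite[Props.~8.2 and 8.5]{Krivine90}, whose arguments are precisely the two inductions on typing derivations you outline, with the same bookkeeping for $\alpha$-renaming bound type variables (for the $\forall$ and $\to_E$ cases in part~1) and bound term variables (for the $\to_I$ case in part~2). Your handling of the generalised $\to_E$ rule---renaming $\vec X$ apart from the substitution variable and invoking commutation of substitutions componentwise---is the only point that is not already literally in the cited references, and you treat it correctly.
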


\noindent
Finally we need a property showing that $\zero$ is only typed by $\tzero$ and its equivalent types, and values are always typed by unit types or equivalent.

\begin{lemma}[Typing $\zero$ and values]
  \label{lem:type0val}
  \begin{enumerate}
  \item\label{it:type0}
    For any $\Gamma$, if
    $\Gamma\vdash\zero\type \T$\ then $\T\equiv\tzero$.
  \item\label{it:typeval}
    For any value~$\v$ (\ie a variable or an abstraction),
    if $`G\vdash\v\type\T$ then~\T\ is necessarily equivalent to a unit type.
  \end{enumerate}
\end{lemma}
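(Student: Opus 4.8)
The plan is to prove both items by induction on the typing derivation, using the shape of the subject term ($\zero$ in item~\ref{it:type0}, a variable or an abstraction in item~\ref{it:typeval}) to eliminate most of the rules of Fig.~\ref{fig:types} as possible last steps. The point is that the three rules $\equiv$, $\forall_I$ and $\forall_E$ leave the subject term unchanged, so the induction hypothesis always applies to the \emph{same} term; the argument then reduces to inspecting what each admissible last rule does to the type. Only $ax_\tzero$, $\equiv$, $\forall_I$ and $\forall_E$ can conclude a judgement about $\zero$ (the rules $ax$, $\to_I$, $\to_E$ and $+_I$ have, respectively, a variable, an abstraction, an application and a sum as subject), and only $ax$, $\to_I$, $\equiv$, $\forall_I$ and $\forall_E$ can conclude a judgement about a value.

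Before treating item~\ref{it:type0} I would isolate the key auxiliary fact that \emph{no unit type is equivalent to~$\tzero$}. To establish it, assign to every type~$\T$ the natural number $N(\T)$ counting its unit-type summands, defined by $N(\tzero)=0$, by $N(\U)=1$ for every unit type~$\U$, and by $N(\T+\R)=N(\T)+N(\R)$. A direct check that $N$ respects the three generators of $\equiv$ (commutativity, associativity, and $\T+\tzero\equiv\T$) shows that $N$ is constant on $\equiv$-classes; since $N(\tzero)=0$ while $N(\U)=1$, we obtain $\tzero\not\equiv\U$ for every unit type~$\U$. With this in hand the induction for item~\ref{it:type0} is immediate: rule $ax_\tzero$ gives $\T=\tzero$ directly, and rule $\equiv$ gives $\T\equiv\T'\equiv\tzero$ by the induction hypothesis. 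The cases $\forall_I$ and $\forall_E$ are vacuous, since both require their premise type to be a unit type, which by the induction hypothesis would have to be equivalent to~$\tzero$, contradicting the auxiliary fact. Hence $\T\equiv\tzero$ in all surviving cases.

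For item~\ref{it:typeval} I would first record that the substitution $\U\tsubs{\V}$ of a unit type~$\V$ for a variable in a unit type~$\U$ is again a unit type, by a one-line structural induction on~$\U$ along the grammar $X\mid\U\to\T\mid\forall X.\U$. The induction on the derivation of $`G\vdash\v\type\T$ then closes each case: when $\v=x$ the possible last rules are $ax$, $\equiv$, $\forall_I$, $\forall_E$, and when $\v=\lambda x.\t$ they are $\to_I$, $\equiv$, $\forall_I$, $\forall_E$. In every case except $\equiv$ the type produced is \emph{syntactically} a unit type: $\U$ for $ax$, $\U\to\T$ for $\to_I$, $\forall X.\U$ for $\forall_I$, and $\U\tsubs{\V}$ for $\forall_E$ (a unit type by the substitution remark). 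For $\equiv$ the induction hypothesis yields a unit type equivalent to the premise type, hence equivalent to~$\T$.

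The only genuinely delicate point is the auxiliary fact $\tzero\not\equiv\U$, i.e.\ producing a well-defined numerical invariant on $\equiv$-classes; everything else is routine rule inspection. The invariant~$N$ above is exactly the arity underlying Remark~\ref{rem:unitsaddition}, so one could equivalently phrase the whole argument through normal forms (sums of unit types), but the counting invariant keeps the verification that $\equiv$ preserves it completely mechanical, which is why I would prefer it.
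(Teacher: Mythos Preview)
Your proof is correct and complete. The paper itself does not give a proof of this lemma (it is stated in Appendix~\ref{sec:pr-sr} and then used without argument), so there is nothing to compare against; your approach---induction on the typing derivation, ruling out the inapplicable rules by the syntactic shape of the subject---is the natural one and is undoubtedly what the authors had in mind.

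One minor remark on the auxiliary invariant $N$: since $\equiv$ is defined as the least \emph{congruence} generated by the three axioms, you should in principle also check the congruence cases (e.g.\ if $\T\equiv\T'$ then $\U\to\T\equiv\U\to\T'$, and similarly under $\forall$ and $+$). These are immediate because $N$ assigns $1$ to every unit type regardless of its internal structure, and the $+$ case follows by the induction hypothesis of the invariance proof; but it is worth saying explicitly, since ``checking the generators'' alone is not literally sufficient for a congruence.
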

\noindent Using all the previous lemmas, the proof of subject reduction is made by induction on typing derivation.

\section{Formalisation of the Translation into System~$F$}
\label{ap:trad}

\subsection{Some Examples}
\label{ap:ex-sadd}

\begin{example}[Tree composition]
  \label{ex:tree-compo}\\
  Let \quad \A=\hspace{-1cm}
  \begin{tikzpicture}[baseline={(0,-1ex)}]
    \coordinate
    child {child {node[punto=\ell]{}}
           child {node[punto=\Az]{}}}
    child {node[punto=\ell]{}};
  \end{tikzpicture}
  and \quad\A'=\hspace{-0.5cm}
  \begin{tikzpicture}[baseline={(0,-1ex)},very thick]
    \coordinate
    child {node[punto=\ell]{}}
    child {node[punto=\Az]{}};
  \end{tikzpicture}.
  \qquad Then\quad$\A`o\A'$=\hspace{-1.5cm}
  \begin{tikzpicture}[baseline={(0,-1ex)}]
    \tikzstyle{level 1}=[sibling distance=4em]
    \tikzstyle{level 2}=[sibling distance=2.5em]
    \tikzstyle{level 3}=[sibling distance=2.5em]
    \coordinate
    child {child {child [very thick]{node[punto=\ell]{}}
                  child [very thick]{node[punto=\Az]{}}}
           child {node[punto=\Az]{}}
         }
    child {child [very thick]{node[punto=\ell]{}}
          child [very thick]{node[punto=\Az]{}}};
  \end{tikzpicture}
\end{example}


\begin{example}[Arrow elimination rule in \sadd]
  \label{ex:tree-arrow-elim}\\
  The following derivation is correct:
  \begin{displaymath}
    \frac{`G\vdash\t:
      \big(`A\vec X.(\U\to T_1)+`A\vec X.(\U\to T_2)\big)+\tzero
      \quad
      `G\vdash\u:\U\tsubs[\vec X]{\vec V}+\tzero}{
      `G\vdash\t\u:
      \big((T_1\tsubs[\vec X]{\vec V}+\tzero)+
      (T_2\tsubs[\vec X]{\vec V}+\tzero)\big)+\tzero}
    \estruct
  \end{displaymath}
  Graphically, we can represent this rule as follows:\\
  if \t\ has type \hspace{-1.5cm}
  \begin{tikzpicture}[baseline={(0,-1ex)}]
    \coordinate
    child {child {node[punto=\hspace{-0.5cm}\scriptstyle\forall\vec X.(\U\to T_1)]{}}
           child {node[punto=\hspace{0.5cm}\scriptstyle \forall\vec X.(\U\to T_2)]{}}}
    child {node[punto=\scriptstyle\tzero]{}};
  \end{tikzpicture}
  \hspace{-0.5cm} and \u\ has type \hspace{-1.1cm}
  \begin{tikzpicture}[baseline={(0,-1ex)},very thick]
    \coordinate
    child {node[punto={\scriptstyle \U\tsubs[\vec X]{\vec V}}]{}}
    child {node[punto=\scriptstyle \tzero]{}};
  \end{tikzpicture},
  then \t\u\ has type \hspace{-2cm}
  \begin{tikzpicture}[baseline={(0,-1ex)}]
    \tikzstyle{level 1}=[sibling distance=6em]
    \tikzstyle{level 2}=[sibling distance=5em]
    \tikzstyle{level 3}=[sibling distance=2.5em]
    \coordinate
    child {child {child [very thick]{node[punto={\scriptstyle T_1\tsubs[\vec X]{\vec V}}]{}}
                  child [very thick]{node[punto=\scriptstyle\tzero]{}}}
           child {child [very thick]{node[punto={\scriptstyle T_2\tsubs[\vec X]{\vec V}}]{}}
                  child [very thick]{node[punto=\scriptstyle\tzero]{}}}}
    child{node[punto=\scriptstyle\tzero]{}}
    ;
  \end{tikzpicture}
\end{example}


\begin{figure}[ht]
$$
      \begin{array}[t]{c}
        \multicolumn{1}{l}{\bf Reduction\ rules:}
        \\ \displaystyle
        (`lx.t)u\to t\subs{u}
        \qquad;\qquad
        `p_i(\pair{t_1}{t_2})\to t_i
        \\ \displaystyle
        `lx.tx \to t \quad({\textstyle if\ x`;F\!V(t)\ })
        \qquad;\qquad
        \pair{\pil{p}}{\pir{p}}\to p
        \\ \\
        \multicolumn{1}{l}{\bf Typing\ rules:}
        \\ \displaystyle
        \frac{ }{`D,x:A\thesi x:A}{\scriptstyle Ax}
        \qquad;\qquad
        \frac{ }{`D\thesi \star:\uno}{\scriptstyle \uno}
        \qquad;\qquad
        \frac{`D,x:A\thesi t:B}{`D\thesi `lx.t:A\tof B}
        {\scriptstyle \tof I}
        \\ \\ \displaystyle
        \frac{`D\thesi t:A\tof B \quad `D\thesi u:A}{`D\thesi tu:B}
        {\scriptstyle \tof E}
        \qquad;\qquad
        \frac{`D\thesi t:A\quad `D\thesi u:B}{`D\thesi\pair{t}{u}:A`*B}
        {\scriptstyle `* I}
        \\ \\ \displaystyle
        \frac{`D\thesi t:A`*B}{`D\thesi\pil{t}:A}{\scriptstyle`*
          E_{\mathtt{l}}}
        \qquad;\qquad
        \frac{`D\thesi t:A`*B}{`D\thesi\pir{t}:B}{\scriptstyle`* E_{\mathtt{r}}}
        \\ \\ \displaystyle
        \frac{`D\thesi t:A\quad X`;F\!V(`D)}{`D\thesi t:`AX.A}
        {\scriptstyle `AI}
        \qquad;\qquad
        \frac{`D\thesi t:`AX.A}{`D\thesi t:A\tsubs{B}}{\scriptstyle `AE}
        \\ \\
      \end{array}
$$ 
  \caption{System $F$ with pairs}
  \label{fig:sysF}
\end{figure}


\subsection{Soundness with respect to Typing.}\label{ap:cor-trad}
We need first some lemmas and definitions.
It can be immediately checked that the tree structure of a type is
preserved by translation, as expressed in the following lemma.
\begin{lemma}\label{lem:tree-trad}
  If $T=\A[w\mapsto \U_w]$ is a type of \sadd, then $\trad{T} = \A[w\mapsto \trad{\U_w}].$
\end{lemma}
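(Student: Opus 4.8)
The plan is to proceed by structural induction on the tree~$\A$, exploiting the fact that the translation $\trad{\cdot}$ is defined compositionally on the sum structure of a type and therefore commutes with the tree-building operations. Concretely, the leaves~$\ell$ of~$\A$ are exactly the maximal subexpressions of~$T$ that are unit types, and the internal structure of~$\A$ records only the occurrences of~$+$ (the binary nodes~$\As$) and of~$\tzero$ (the nodes~$\Az$). The clauses $\trad{T+R}=\trad T`*\trad R$ and $\trad{\tzero}=\uno$ say precisely that $\trad{\cdot}$ sends~$+$ to the product and~$\tzero$ to~$\uno$, which are exactly the \F-types denoted by an~$\As$-node and a~$\Az$-node in the tree representation of \F-types. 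Hence applying~$\trad{\cdot}$ should leave the shape~$\A$ untouched and only relabel each leaf~$\U_w$ by~$\trad{\U_w}$.

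First I would treat the two base cases. If $\A=\ell$ is a single leaf, then~$T$ is the unit type~$\U_\varepsilon$ labelling the root, and both sides of the claimed equation reduce to~$\trad{\U_\varepsilon}$. If $\A=\Az$, then $T=\tzero$, so $\trad{T}=\uno$ by definition, which is the \F-type assigned to a~$\Az$-node; there are no~$\ell$-leaves to relabel, so the equation holds trivially. For the inductive step, $\A=\As(\A_1,\A_2)$, so $T=R+S$ with $R=\A_1[\lab_1]$ and $S=\A_2[\lab_2]$ for the restrictions~$\lab_1,\lab_2$ of the labelling to the two subtrees. The definition of the translation gives $\trad{T}=\trad R`*\trad S$, while the induction hypothesis yields $\trad R=\A_1[w\mapsto\trad{\U_w}]$ and $\trad S=\A_2[w\mapsto\trad{\U_w}]$. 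Since an~$\As$-node denotes the product~$`*$ in the \F-tree representation, the right-hand side $\A[w\mapsto\trad{\U_w}]$ is by definition $\trad R`*\trad S$, which matches~$\trad{T}$.

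I do not expect any genuine obstacle here: the statement is essentially the observation that $\trad{\cdot}$ is a homomorphism for the sum-with-$\tzero$ structure, and the uniqueness of the decomposition $T=\A_T[\lab_T]$ guarantees that this structure coincides with the tree~$\A$. The only point deserving care is the bookkeeping of leaf positions~$w$ under the split of~$\lab$ into~$\lab_1$ and~$\lab_2$, i.e.\ checking that the prefixing of paths by~$\mathtt{l}$ or~$\mathtt{r}$ is compatible on both sides of the equation; this is routine given the conventions already fixed for labelling functions.
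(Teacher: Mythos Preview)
Your proposal is correct and is exactly the routine verification the paper has in mind: the paper states only that this ``can be immediately checked'' and gives no further argument, and your structural induction on~$\A$ is the obvious way to spell out that immediate check. There is nothing to add.
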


\begin{definition}
  \label{def:ftypes-tree}
  We call \textit{F-labelling} a function defined from leaves to types of \F.
  Given~$`f$, an F-labelling, and~$\A$, a tree, the type~$\A[`f]$ of \F\ is
  defined as expected:
$$\ell[`f] = `f(\varepsilon),\qquad
\Az[`f] = \uno, \qquad
\As(\A,\A')[`f] = \A[w\mapsto `f(\mathtt{l}w)]`*{\,\A'[w\mapsto `f(\mathtt{r}w)]\,}$$
\end{definition}
There is a trivial relation between the term-labelling of a
tree, and its F-labelling, that we give in the following lemma.
\begin{lemma}\label{lem:tree-type}
  Let~\A\ be a tree.
  \begin{enumerate}
  \item\label{it:constr}
    If $`G\thesi t_w: A_w$ for each
    $\ell$-leave~$w$, then
    $`G\thesi \A[w\mapsto t_w]:\A[w\mapsto A_w]$.
  \item\label{it:deconstr}
    If $`G\thesi t: \A[w\mapsto A_w]$, then for
    each~$\ell$-leaf of~\A, $`G\thesi`p_{\overline w}(t):A_w$.
  \end{enumerate}
\end{lemma}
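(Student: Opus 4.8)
The plan is to prove both statements by structural induction on the tree~$\A$, following the grammar $\A ::= \ell \mid \Az \mid \As(\A_1,\A_2)$ and using the typing rules of~\F\ recalled in Fig.~\ref{fig:sysF}. Recall that the term tree-labelling is defined symmetrically to the F-labelling of Definition~\ref{def:ftypes-tree}: $\ell[w\mapsto t_w]=t_\varepsilon$, $\Az[w\mapsto t_w]=\star$, and $\As(\A_1,\A_2)[w\mapsto t_w]=\pair{\A_1[w\mapsto t_{\mathtt lw}]}{\A_2[w\mapsto t_{\mathtt rw}]}$. The two items read out the same product structure in opposite directions, so the same case analysis drives both.

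For item~\ref{it:constr}, the base cases are immediate: when $\A=\ell$ the conclusion is literally the hypothesis $`G\thesi t_\varepsilon:A_\varepsilon$, and when $\A=\Az$ there are no $\ell$-leaves, so the hypothesis is vacuous and $`G\thesi\star:\uno$ holds by the $\uno$-rule. For $\A=\As(\A_1,\A_2)$, the $\ell$-leaves of~$\A$ split into the words $\mathtt lw$ (for $\ell$-leaves~$w$ of~$\A_1$) and $\mathtt rw$ (for those of~$\A_2$); applying the induction hypothesis to~$\A_1$ with the labellings $w\mapsto t_{\mathtt lw}$ and $w\mapsto A_{\mathtt lw}$, and to~$\A_2$ with $w\mapsto t_{\mathtt rw}$ and $w\mapsto A_{\mathtt rw}$, yields the typings of both components. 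A single application of rule $`* I$ then gives $`G\thesi\pair{\A_1[\dots]}{\A_2[\dots]}:\A_1[w\mapsto A_{\mathtt lw}]`*\A_2[w\mapsto A_{\mathtt rw}]$, which is exactly $\A[w\mapsto A_w]$ by Definition~\ref{def:ftypes-tree}.

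For item~\ref{it:deconstr}, here~$t$ is arbitrary (only its type is known), so I would argue via the projection rules rather than the shape of~$t$. Again the base cases are direct: for $\A=\ell$ the only leaf is $\varepsilon$, $`p_{\overline\varepsilon}(t)=t$, and $`G\thesi t:A_\varepsilon$ is the hypothesis; for $\A=\Az$ there is nothing to prove. For $\A=\As(\A_1,\A_2)$ we have $`G\thesi t:\A_1[w\mapsto A_{\mathtt lw}]`*\A_2[w\mapsto A_{\mathtt rw}]$; rule $`* E_{\mathtt l}$ gives $`G\thesi\pil{t}:\A_1[w\mapsto A_{\mathtt lw}]$, and the induction hypothesis on~$\A_1$ yields $`G\thesi`p_{\overline{w'}}(\pil{t}):A_{\mathtt lw'}$ for every $\ell$-leaf~$w'$ of~$\A_1$, and symmetrically with $\pir{t}$ and $`* E_{\mathtt r}$ for~$\A_2$. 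It then remains to observe the identity $`p_{\overline{\mathtt lw'}}(t)=`p_{\overline{w'}}(\pil{t})$ (and its right-handed analogue), after which the claim holds for all $\ell$-leaves of~$\A$.

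The one genuinely delicate point is this last bookkeeping step, and I expect it to be the main obstacle. Since $\overline{\mathtt lw'}=\overline{w'}\mathtt l$ and $`p_{`a_1\cdots`a_n}$ abbreviates $`p_{`a_1}(\cdots`p_{`a_n}(\cdot))$, the letter $\mathtt l$ stripped off the root of the path becomes the \emph{innermost} projection $\pil{\cdot}$, with the remaining prefix $\overline{w'}$ applied on top of it; hence $`p_{\overline{w'}\mathtt l}(t)=`p_{\overline{w'}}(\pil{t})$, exactly matching what the induction hypothesis produces. Lining up the mirror-word convention and the left/right orientation with the order in which the projections compose is the only thing requiring care; once it is checked, everything else is a routine application of the $\uno$-, $`* I$-, and $`* E$-rules.
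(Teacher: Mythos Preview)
Your proof is correct. The paper states this lemma without proof, treating it as an immediate consequence of the tree structure and the \F\ typing rules; your structural induction on~$\A$ is exactly the routine argument one would expect, and your handling of the mirror-word bookkeeping in item~\ref{it:deconstr} (checking that $`p_{\overline{\mathtt lw'}}(t)=`p_{\overline{w'}}(\pil{t})$ via $\overline{\mathtt lw'}=\overline{w'}\mathtt l$) is the right observation.
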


\noindent
 \textbf{Theorem~\ref{theo:sadd-F} (Correction with respect to typing).}
{\it  If a judgement~$`G\vdash\t:T$ is derivable in \sadd\ with
  derivation~$\mathcal{D}$, then $\trad{`G}\thesi\tradt{\t}:\trad{T}$.}

\begin{proof}
  We prove this proposition by induction on the derivation~$\mathcal{D}$.
  If it ends with rule~$ax$ or~$ax_{\tzero}$, we use rule~$Ax$
  or~\uno\ respectively in \F.
  If the last rule of~$\mathcal{D}$ is~$+_I$ or~$\to_I$ we can
  conclude by induction.
  If the last rule is~$`A_I$, we just need to note that $X`;FV(`G)$
  implies $X`;FV(\trad{`G})$.
  If it is the rule~$`A_E$, we just have to note that
  $\trad{\,\U\tsubs{V}\,}=\trad{\U}\,\tsubs{\trad{V}\,}$ to conclude
  with induction hypothesis.
  The only interesting case is when~$\mathcal{D}$ ends with rule~$\estruct$:
  \begin{displaymath}
    \mathcal{D}=
    \frac{
      `G\vdash\t:\A[w\mapsto\forall\vec X.(\U\to T_w)]\qquad
      `G\vdash\u:\A'[v\mapsto \U\tsubs[\vec X]{\vec V_v}]}
    {`G\vdash\t\u:\A`o\A'[wv\mapsto T_w\tsubs[\vec X]{\vec V_v}]}
  \end{displaymath}
  By induction hypothesis,\ \
  $\trad{`G}\thesi\tradt[\mathcal{D}_1]{\t}:
  \trad{\A[w\mapsto\forall\vec X.(\U\to T_w)]}$\ \
  and\ \
  $\trad{`G}\thesi\tradt[\mathcal{D}_2]{\u}:
  \trad{\A'[v\mapsto \U\tsubs[\vec X]{\vec V_v}]}$.
  By Lemma~\ref{lem:tree-trad}, it means that
  $\trad{`G}\thesi\tradt[\mathcal{D}_1]{\t}:
  \A[w\mapsto\forall\vec X.\trad{\U}\tof \trad{T_w}]$\
  \ and\ \
  $\trad{`G}\thesi\tradt[\mathcal{D}_2]{\u}:
  \A'[v\mapsto\trad{\U}\tsubs[\vec X]{\vec{\trad{V_v}}}]$.
  By Lemma~\ref{lem:tree-type}.\ref{it:deconstr}, for
  every~$\ell$-leaf~$w$ of~\A, and every~$\ell$-leaf~$v$ of~\A', we
  can derive
  $$\prooftree
      \prooftree\trad{`G}\thesi `p_{\overline w}(\tradt[\mathcal{D}_1]{\t}):\forall\vec X.\trad{\U}\tof\trad{T_w}
      \justifies\trad{`G}\thesi `p_{\overline w}(\tradt[\mathcal{D}_1]{\t})\!:\trad{\U}\tsubs[\vec X]{\vec{\trad{V_v}}}\tof\trad{T_w}\tsubs[\vec X]{\vec{\trad{V_v}}}
      \endprooftree
      \
      \trad{`G}\thesi
      `p_{\overline v}(\tradt[\mathcal{D}_2]{\u})\!:\trad{\U}\tsubs[\vec X]{\vec{\trad{V_v}}}
   \justifies\trad{`G}\thesi
      `p_{\overline w}(\tradt[\mathcal{D}_1]{\t})~
      `p_{\overline v}(\tradt[\mathcal{D}_2]{\u}):
      \trad{T_w}\tsubs[\vec X]{\vec{\trad{V_v}}}
   \endprooftree$$
  Since
  \begin{math}
    \tradt{\t\u}=\A`o\A'
    [wv\mapsto
    `p_{\overline w}(\tradt[\mathcal{D}_1]{\t})~`p_{\overline v}
    (\tradt[\mathcal{D}_2]{\u})]
  \end{math},
  by Lemma~\ref{lem:tree-type}(\ref{it:constr}) we can conclude
  $\trad{`G}\thesi \tradt{\t\u}: \A`o\A'
    [wv\mapsto \trad{T_w}\tsubs[\vec X]{\vec{\trad{V_v}}}]$,
  and then conclude using Lemma~\ref{lem:tree-trad} again.\qed
\end{proof}


\subsection{Partial Translation from \F\ to \sadd.}\label{ap:reverse}

To show that the translation from \sadd\ to \F\ is meaningful and non trivial, we define a partial  encoding from \F\ to \sadd, and prove that it is the inverse of the previous translation.
We define inductively the partial function~$\tradi{\cdot}$ from the types of \F\ to those of~\sadd, as follows.
$$\tradi X = X\qquad\mbox{ and }\qquad\tradi{\uno} = \tzero\ ;$$
$$\mbox{if }\tradi{A}, \tradi{A'}\mbox{ and }\tradi{B}\mbox{ are defined, then}$$
$$\tradi{\forall X.A} = \forall X.\tradi A\mbox{ and }
\tradi{A\times B} = \tradi A+\tradi B\ ;$$
$$\mbox{ and if also }\tradi{A'}`:\U,\mbox{ then }
\tradi{A'\tof B} = \tradi{A'}\to\tradi B.$$

This translation is extended to contexts in the usual way.
Similarly, we define a partial function from terms of \F\ to those of \sadd:
\begin{displaymath}
  \tradti{x}=x\quad;\quad
  \tradti{\lambda x.t}=\lambda x.\tradti{t}\quad;\quad
  \tradti{tu}=\tradti{t}\tradti{u}\quad;\quad
  \tradti{\star}=\ve 0\quad;
\end{displaymath}
\begin{displaymath}
  \tradti{\A[wv\mapsto\pi_{\overline{w}}(t)\pi_{\overline{v}}(u)]}=\tradti{t}\tradti{u} \qquad \textrm{ if }\A\neq\Az\textrm{ and }\A\neq\ell\quad;
\end{displaymath}
\begin{displaymath}
  \tradti{\pair{t_1}{t_2}}=\tradti{t_1}+\tradti{t_2}\qquad
  \textrm{ if }   \pair{t_1}{t_2}\neq\A[wv\mapsto\pi_{\overline{w}}(u)\pi_{\overline{v}}(u')]
  \textrm{ for any }\A,u,u'
\end{displaymath}

This defines the inverse of~$\tradt{\cdot}$, as specified by the following theorem.
\begin{theorem}\label{thm:translation-no-trivial}
 If $\Gamma\vdash\t: T$ is derivable in \sadd\ with derivation \D, then
 $\tradi{\trad{\Gamma}}\vdash\tradti{\tradt{\t}}: \tradi{\trad{T}}$ is syntactically the same sequent.
\end{theorem}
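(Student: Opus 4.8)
The plan is to prove Theorem~\ref{thm:translation-no-trivial} by induction on the derivation~\D\ in \sadd, showing simultaneously that $\tradi{\trad{T}}=T$, that $\tradi{\trad{\Gamma}}=\Gamma$, and that $\tradti{\tradt{\t}}=\t$, so that the resulting sequent is literally identical to the original one. The first two facts (the type- and context-level inverses) follow by an easy separate induction on the structure of types, matching each clause of~$\trad{\cdot}$ with the corresponding clause of~$\tradi{\cdot}$: for instance $\tradi{\trad{\U\to T}}=\tradi{\trad\U\tof\trad T}=\tradi{\trad\U}\to\tradi{\trad T}$, which by induction equals $\U\to T$, using that $\trad\U$ is always a type of the form whose $\tradi{\cdot}$ lands in a unit type (this is where one checks that the side condition $\tradi{A'}`:\U$ in the definition of $\tradi{\cdot}$ is met, precisely because $\trad{\cdot}$ sends unit types to the arrow/forall/variable fragment of \F). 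These preliminary lemmas I would state and dispatch first, since the main induction relies on them when reconstructing the type annotations.

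Next I would carry out the main induction on~\D\ for the term-level identity $\tradti{\tradt{\t}}=\t$. The cases for $ax$, $ax_{\tzero}$, $\to_I$, and $+_I$ are immediate from the defining clauses: e.g.\ $\tradti{\tradt{\t+\u}}=\tradti{\pair{\tradt[\D_1]\t}{\tradt[\D_2]\u}}=\tradti{\tradt[\D_1]\t}+\tradti{\tradt[\D_2]\u}$, which by induction is $\t+\u$, provided the pair is genuinely recognised as a ``sum pair'' and not as an application pattern by the clause for $\tradti{\cdot}$. The quantifier rules $`A_I$ and $`A_E$ are transparent because $\tradt{\cdot}$ leaves the term unchanged on those steps, so there is nothing to reconstruct. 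The substantive case is~\estruct, where $\tradt{\t\u}=\A`o\A'[wv\mapsto`p_{\overline w}(\tradt[\D_1]\t)`p_{\overline v}(\tradt[\D_2]\u)]$; here I must invoke the application clause of $\tradti{\cdot}$, namely $\tradti{\A[wv\mapsto\pi_{\overline w}(t)\pi_{\overline v}(u)]}=\tradti t\,\tradti u$ when $\A\neq\Az,\ell$, to collapse the entire tree-shaped \F-term back to a single application $\tradti{\tradt[\D_1]\t}\,\tradti{\tradt[\D_2]\u}$, which by induction is $\t\u$.

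The hard part will be showing that the two clauses of $\tradti{\cdot}$ that carry side conditions fire in exactly the right situations, so that the partial function is actually defined on every $\tradt{\t}$ and returns the intended value. Concretely, I must argue that a pair $\pair{t_1}{t_2}$ arising from a $+_I$ step is never mistakenly parsed by the application clause, and conversely that the composite tree $\A`o\A'[\cdots]$ produced by~\estruct\ is correctly matched by the application clause rather than being decomposed componentwise by the pair clause. This amounts to verifying that the pattern ``$\A[wv\mapsto\pi_{\overline w}(t)\pi_{\overline v}(u)]$ for some $\A\neq\Az,\ell$ and some $t,u$'' is disjoint from the sum pattern in precisely the way the definition intends. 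I would handle this by observing that the translation $\tradt{\cdot}$ only ever builds application-shaped trees at~\estruct\ steps (whose leaves are genuine \F-applications $\pi_{\overline w}(\cdot)\pi_{\overline v}(\cdot)$), while $+_I$ steps build pairs whose components are themselves translations and hence not of that leaf shape unless they too come from an~\estruct; a careful reading of which $F_P$-terms are in the image of $\tradt{\cdot}$ makes the two clauses deterministic, and I would record this as a small syntactic lemma about the shape of translated terms before running the main induction. Once that disjointness is pinned down, each inductive step reduces to a direct computation, and the three identities combine to give that $\tradi{\trad\Gamma}\vdash\tradti{\tradt\t}:\tradi{\trad T}$ is syntactically the sequent $\Gamma\vdash\t:T$.
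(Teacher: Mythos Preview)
The paper does not actually give a proof of this theorem: it is stated at the end of Appendix~\ref{ap:reverse} and the text moves directly to the next subsection. So there is no argument in the paper to compare against.

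Your plan is the natural one and is almost certainly what the authors intend: a separate structural induction on types establishing $\tradi{\trad{T}}=T$ (whence the same for contexts), followed by an induction on the derivation~\D\ for the term-level identity $\tradti{\tradt{\t}}=\t$. You correctly isolate the \estruct\ case as the only non-routine one, and you correctly pinpoint the real content of the proof: verifying that the two side-conditioned clauses of $\tradti{\cdot}$ (the application pattern versus the fallback pair clause) are triggered exactly as intended on terms in the image of $\tradt{\cdot}$.

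One refinement worth making explicit in your syntactic lemma: the reason the disjointness holds is that $\tradt{\cdot}$ never produces a bare projection $\pi_i(\cdot)$ as a translated term. Hence, whenever a pair $\pair{\tradt[\D_1]{\t}}{\tradt[\D_2]{\u}}$ built by $+_I$ is tested against the application pattern $\A[wv\mapsto\pi_{\overline w}(t)\pi_{\overline v}(u)]$, any nonempty $\overline w$ or $\overline v$ would force a component of the pair to contain a genuine projection applied to a \emph{common} $t$ or $u$, which cannot arise from two independent sub-translations. Conversely, the composite tree produced by \estruct\ always has that uniform projection structure by construction. Recording this observation as the lemma you mention (translated terms are never of the form $\pi_i(\cdot)$, and the leaf labels in an \estruct\ output share the same $t,u$) makes the case analysis go through cleanly. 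With that in place, your induction is complete and strictly more detailed than what the paper provides.
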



\subsection{Soundness with respect to Reduction.}\label{ap:cor-red}
First we need a substitution lemma for the translation of terms.
\begin{lemma}\label{lem:substitutionoftranslation}
  Let $\D_1=\Gamma,x\type U\vdash\ve t\type T$ and
  $\D_2=\Gamma\vdash\ve{b}\type U$, then $\exists \D_3$ such that
  $\tradt[\D_1]{\ve t}\subs{\tradt[\D_2]{\ve b}}=
  \tradt[\D_3]{\ve t\subs{\ve b}}$.
\end{lemma}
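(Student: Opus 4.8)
The plan is to argue by structural induction on~$\D_1$, constructing~$\D_3$ by the construction underlying the substitution lemma (Lemma~\ref{lem:substitution}(2)): each leaf of~$\D_1$ that is an instance of~$ax$ typing the substituted variable~$x$ is replaced by a copy of~$\D_2$, while every other rule is left untouched. Because this construction reuses the rules of~$\D_1$ verbatim and only swaps axiom leaves—whose type~$U$ is exactly the type derived by~$\D_2$—every intermediate type annotation of~$\D_3$ coincides with the corresponding one in~$\D_1$; in particular the two derivations carry the same trees, a fact that is decisive in the application case. The verification of the translation identity then rests on one elementary observation about \F: substitution commutes with the constructors produced by~$\tradt{\cdot}$, namely $\pair{t_1}{t_2}\subs{M}=\pair{t_1\subs{M}}{t_2\subs{M}}$, $`p_i(t)\subs{M}=`p_i(t\subs{M})$ and $\star\subs{M}=\star$, whence substitution also commutes with the tree assembly $\A[\,\cdot\,]$ and with the iterated projections $`p_{\overline w}(\,\cdot\,)$.

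The base and structural cases are immediate. If~$\D_1$ ends with~$ax$ on~$x$, then $\t=x$, $T=U$ and $\t\subs{\b}=\b$; taking $\D_3=\D_2$ yields $\tradt[\D_1]{x}\subs{\tradt[\D_2]{\b}}=x\subs{\tradt[\D_2]{\b}}=\tradt[\D_2]{\b}=\tradt[\D_3]{\t\subs{\b}}$. If it ends with~$ax_{\tzero}$, both sides equal~$\star$ and~$x$ does not occur. For~$+_I$ the translation is a pair, and for~$\to_I$ an abstraction, of the translations of the immediate subderivations, so substitution distributes and the claim follows from the induction hypothesis (in the $\to_I$ case one first renames the bound variable away from~$\fv{\b}$ and weakens~$\D_2$ to the enlarged context, which is harmless since contexts are sets). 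Finally $`A_I$ and $`A_E$ change neither the term nor its translation, so the induction hypothesis applies directly.

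The substantive case is~$\estruct$, where the subject is an application~$\t\u$ with premisses $\E_1=`G,x\type U\vdash\t:\A[w\mapsto\forall\vec X.(\V\to T_w)]$ and $\E_2=`G,x\type U\vdash\u:\A'[v\mapsto\V\tsubs[\vec X]{\vec V_v}]$, so that $\tradt[\D_1]{\t\u}=\A`o\A'[wv\mapsto`p_{\overline w}(\tradt[\E_1]{\t})\,`p_{\overline v}(\tradt[\E_2]{\u})]$. Writing $M=\tradt[\D_2]{\b}$, the commutations of the first paragraph push $\subs{M}$ inside each projected factor. The induction hypothesis on~$\E_1$ and~$\E_2$ supplies derivations~$\E_1',\E_2'$ with $\tradt[\E_1]{\t}\subs{M}=\tradt[\E_1']{\t\subs{\b}}$ and $\tradt[\E_2]{\u}\subs{M}=\tradt[\E_2']{\u\subs{\b}}$, and by the remark above~$\E_1',\E_2'$ derive the same types as~$\E_1,\E_2$, hence carry exactly the trees~$\A,\A'$ with the same labellings. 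Combining~$\E_1'$ and~$\E_2'$ through~$\estruct$ produces~$\D_3$, whose translation $\A`o\A'[wv\mapsto`p_{\overline w}(\tradt[\E_1']{\t\subs{\b}})\,`p_{\overline v}(\tradt[\E_2']{\u\subs{\b}})]$ matches $\tradt[\D_1]{\t\u}\subs{M}$ factor by factor, recalling that $(\t\u)\subs{\b}=(\t\subs{\b})(\u\subs{\b})$.

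The one delicate point—and the reason the induction closes—is precisely this invariance of the tree structure under the substitution. Since~$\tradt{\cdot}$ translates an application by indexing over the trees read off the types of~$\t$ and~$\u$, the argument would break if substituting~$\b$ for~$x$ could alter those types; it is exactly because our construction of~$\D_3$ preserves every type annotation that~$\A,\A'$ and their labellings are unchanged and the two tree-indexed terms coincide. Everything else reduces to the routine bookkeeping that substitution commutes with pairs, projections, and~$\star$.
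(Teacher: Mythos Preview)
The paper does not actually supply a proof of this lemma; it merely states it as a prerequisite for Theorem~\ref{theo:cor-red}. Your argument by induction on~$\D_1$, replacing axiom leaves for~$x$ by copies of~$\D_2$ and observing that this preserves every type annotation (hence the trees~$\A,\A'$ in the $\estruct$ case), is exactly the expected proof and is correct. One small omission: you should also mention the trivial case where~$\D_1$ ends with~$ax$ on a variable~$y\neq x$, where both sides equal~$y$ and~$\D_3$ is the axiom on~$y$ in the smaller context~$\Gamma$; this is covered implicitly by your ``every other rule is left untouched'' remark but deserves a word in the base cases.
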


\noindent
\textbf{Theorem~\ref{theo:cor-red} (Correction with respect to reduction). }
{\it  Let $`G\vdash \t:T$ be derivable (by~\D) in \sadd, and $\t\to\u$.
  If the reduction is not due to rule $\t + \zero \to \t$,
  then there is \D' deriving $`G\vdash \u:T$, and
  \begin{math}
    \tradt{\t} \to^+\tradt[\D']{\u}
  \end{math}.}

\begin{proof}
  The proof is long but straightforward using the previous lemmas. It follows by induction over $\D$. \qed
\end{proof}

\end{document}